
\documentclass[preprint,12pt]{elsarticle}




\usepackage{amssymb, color}
\usepackage{graphicx}
\usepackage{amsmath}
\usepackage{amsthm}
\usepackage{algorithm}
\usepackage{algorithmic}
\usepackage{enumerate}
\usepackage{array}
\usepackage{multirow}
\usepackage{natbib}
\newtheorem{theorem}{Theorem}
\newtheorem{definition}{Definition}
\newtheorem{*proof}{Proof}






\journal{Information Sciences}

\begin{document}

\begin{frontmatter}



\title{Geometric transformations of multidimensional color images based on NASS}


\author[label1,label2]{Ping Fan}
\author[label3,label1]{Ri-Gui Zhou\corref{cor1}}
\ead{zrg@ecjtu.jx.cn}
\cortext[cor1]{Corresponding author.}
\author[label2]{Naihuan Jing}
\author[label1]{Hai-Sheng Li}
\address[label1]{School of Information Engineering, East China Jiaotong University, Nanchang, Jiangxi 330013, P.R. China}
\address[label2]{Department of mathematics, North Carolina State University, Raleigh, NC 27695, USA}
\address[label3]{College of Information Engineering, Shanghai Maritime University, Shanghai, 201306, P.R. China}

\begin{abstract}
We present quantum algorithms to realize
geometric transformations (two-point swappings, symmetric flips, local flips,
orthogonal rotations, and translations) based on an $n$-qubit normal arbitrary superposition state (NASS).
These transformations are implemented using quantum circuits consisting of basic quantum gates,
which are constructed with polynomial numbers of single-qubit and two-qubit gates.
Complexity analysis shows that the global operators (symmetric flips, local flips, orthogonal rotations)
 can be implemented with  $O(n)$ gates.
The proposed geometric transformations 
are used to facilitate applications of quantum images with low complexity.
\end{abstract}

\begin{keyword}
geometric transformations, quantum image processing, quantum circuits, quantum computing
\end{keyword}

\end{frontmatter}


\section{Introduction}
\label{sec:level1}
Quantum computing is a 
combination of quantum mechanics and computer science. As a new computing paradigm, it offers 
one possible solution to the challenge 
posed by the failure of Moore's law \cite{bibitem13}. Quantum computing \cite{bibitem1} has several unique computational features such as quantum coherence, entanglement, and superposition of quantum states,
which make quantum computing superior to its classical counterpart
in information storage and parallel computing \cite{bibitem1}.
Quantum algorithms such as the quantum Fourier transform \cite{bibitem13,bibitem2} are more efficient than their classical counterparts, as the quantum Fourier transform on ${2^n}$ elements can be performed with $O({n^2})$ gates,
while the best classical analog, the Fast Fourier Transform(FFT), needs $O(n{2^n})$ gates to implement \cite{bibitem13}. Other famous quantum algorithms such as Shor's discrete
logarithms and integer-factoring algorithms 
\cite{bibitem2}, Deutsch's parallel computing algorithm with quantum parallelism and coherence \cite{bibitem3}, and Grover's quadratic speed-up for an unordered database search algorithm \cite{bibitem4,bibitem45} have further shown the advantage of quantum computing over its classical counterpart.
It has been an important problem to come up with quantum algorithms for known classical ones.

One reason that a quantum system is superior to a classical computer in information storage is due to
the fact that the amplitude or phase of a quantum state can be used to store information \cite{bibitem11}. In fact, if we consider a system of n qubits that stores ${2^n}$ complex numbers, for $n=500$, ${2^{500}}$ is larger than the estimated number of atoms in the whole universe. Thus, trying to store all of these complex numbers would not be possible using any conceivable computer \cite{bibitem13}. In a quantum system, if the frequency of the physical nature of a color represents its color instead of the RGB model or the HIS model \cite{bibitema4,bibitema5,bibitema53}, a color can be stored using only a 1-qubit quantum state \cite{bibitem5} and an image may be stored as a quantum array \cite{bibitem5,bibitem6}. A flexible representation of quantum image (FRQI) state stores the colors and coordinates of a two-dimensional (2D) gray image with ${2^n}$ pixels using $(n+1)$ qubits \cite{bibitem7}.  Maximally entangled qubits can be used to represent the vertices of polygons, so images can be reconstructed without using any additional information \cite{bibitem6}. Information storage and retrieval were achieved based on the quantum amplitude in previous studies, but they can be also implemented based on the quantum phase \cite{bibitem11}.

Quantum computing is implemented via quantum gates. Universal quantum gates are expressed as combinations of single-qubit and two-qubit gates \cite{bibitem13,bibitem14}, where two-qubit gates are universal in quantum computing \cite{bibitem15}. In \cite{bibitem16}, an efficient scheme has been proposed for initializing a quantum register with an arbitrary superposed state and the application of the scheme to three specific cases was also discussed.

Many applications in both 2D and three-dimensional (3D) biomedical imaging require efficient techniques for geometric transformations of images \cite{bibitema54,bibitem17}. A polynomial interpolator structure is used for the high-quality geometric transformation of 2D and 3D images in a classical computer system \cite{bibitem17}. In a quantum system, linear transformations \cite{bibitem8,bibitem9,bibitem19}, including two-point swappings, flips, orthogonal rotations, and restricted geometric transformations, are applied to 2D images based on their FRQI state. In addition, secure and efficient image processings for quantum computers have been described \cite{bibitem21,bibitem28},
and these include quantum watermarking \cite{bibitem9}, quantum image encryption and decryption algorithms \cite{bibitema52,bibitem42,bibitema51,bibitem22}.

 An $n$-qubit normal arbitrary superposition state (NASS) can represent a k-dimensional color image (including pixels and colors) where $n$ qubits encode the colors and coordinates of ${2^n}$ pixels (e.g., a five-dimensional color image of $1024 \times 1024 \times 1024 \times 1024 \times 1024$ using only 50 qubits) \cite{bibitem25,bibitem40}.  Based on  NASS and the color treatment strategy described in \cite{bibitem10},  we propose
 a new scheme to implement geometric transformations for multidimensional color images, including two-point swappings, symmetric flips, local flips, orthogonal rotations, and translations. The quantum circuit of a two-point swapping is designed using Gray codes. The quantum circuits of other geometric transformations are implemented in terms of two-point swapping circuits.

The remainder of this paper is organized as follows. Some basic quantum gate operations and a k-dimensional color image representation are described in Section 2.  Geometric transformations for multidimensional color images are discussed in detail in Section 3. The simulated experiments are designed
in Section 4. The conclusions are
given in Section 5. 

\section{\label{sec:level2} Basic quantum gates and representation of k-D color images}

\subsection{\label{sec:level30}Quantum circuits}
By the evolution postulate of quantum mechanics \cite{bibitem13}, the evolution of a closed quantum
system is a reversible process. There are several well-known models of reversible quantum systems, and
the notable ones are the quantum Turing machine 
\cite{bibitem3},  quantum circuit model
 \cite{bibitema1}, quantum cellular automation 
 \cite{bibitema2} and so on. These 
 models are essentially equivalent \cite{bibitema3}, so we choose the easiest quantum circuit model to
 describe geometric transformations. 
 A quantum circuit consists of a series of quantum gates to realize a specific function.
 Some quantum circuits  
 are shown in Figure \ref{fig:1}. These circuits are executed from left-to-right,
 and each line in the circuits represents a wire.
 Equivalently a quantum circuit is a realization of a unitary matrix.

\subsection{\label{sec:level3}Basic quantum gates}
The state of a quantum system is described by a unit vector called {\it ket} in a Hilbert space. 
The left and right kets are
denoted by  $\left\langle \ \right|$ and  $\left| \  \right\rangle $ respectively \cite{bibitem13}.

Let $\left| u \right\rangle$ and $\left| v \right\rangle$ be arbitrary two states 
given by
\[\left| u \right\rangle  = {\left[ {\begin{array}{*{20}{c}}
{{u_0}}& \cdots &{{u_{n - 1}}}
\end{array}} \right]^T}\]
and
\[\left| v \right\rangle  = {\left[ {\begin{array}{*{20}{c}}
{{v_0}}& \cdots &{{v_{n - 1}}}
\end{array}} \right]^T}\]
where ${\left[ \cdot \right]^T}$ denotes the matrix transpose and ${u_i},{v_i} \in\mathbb C$, the set
of complex numbers. 

The Hermitian conjugates of $\left| u \right\rangle$ and $\left| v \right\rangle$ are denoted by
\[\left\langle u \right| = {\left| u \right\rangle ^ + } = \left[ {\begin{array}{*{20}{c}}
{u_0^ + }& \cdots &{u_{n - 1}^ + }
\end{array}} \right]\]
and
\[\left\langle v \right| = {\left| v \right\rangle ^ + } = \left[ {\begin{array}{*{20}{c}}
{v_0^ + }& \cdots &{v_{n - 1}^ + }
\end{array}} \right]\]
where $u_i^+$ and $v_i^+$ are the complex conjugates. 

The special states $\left| {\rm{0}} \right\rangle$ and $\left| {\rm{1}} \right\rangle$ are defined as
\[\left| {\rm{0}} \right\rangle {\rm{ = }}\left[ {\begin{array}{*{20}{c}}
{\rm{1}}\\
{\rm{0}}
\end{array}} \right],  \ \left| {\rm{1}} \right\rangle {\rm{ = }}\left[ {\begin{array}{*{20}{c}}
{\rm{0}}\\
{\rm{1}}
\end{array}} \right].\]

The tensor product of column matrices is defined by
\[\begin{array}{l}
\left| u \right\rangle  \otimes \left| v \right\rangle = {\left[ {\begin{array}{*{20}{c}}
{{u_0}{v_0}}& \cdots &{{u_0}{v_{n - 1}}}& \cdots &{{u_{n - 1}}{v_0}}& \cdots &{{u_{n - 1}}{v_{n - 1}}}
\end{array}} \right]^T}
\end{array}\]
which is also written simply as 
$\left| u \right\rangle \left| v \right\rangle $  or $\left| {uv} \right\rangle $.

The $n$ fold tensor product $U \otimes U\otimes \cdots \otimes U$ of the operator $U$ is abbreviated as ${U^{ \otimes n}}$. Similarly $\left| u \right\rangle \otimes \left| u \right\rangle \cdots \otimes \left| u \right\rangle $ is expressed as ${\left| u \right\rangle ^{ \otimes n}}$.

A linear operator $A$ is unitary if $A{A^ + } = {A^ + }A = I$, where
${A^ + }$ is the conjugate transpose of $A$ and $I$ is the identity operator shown in Figure \ref{fig:1}.

\begin{figure}[!t]
\centering
\includegraphics[width=3in]{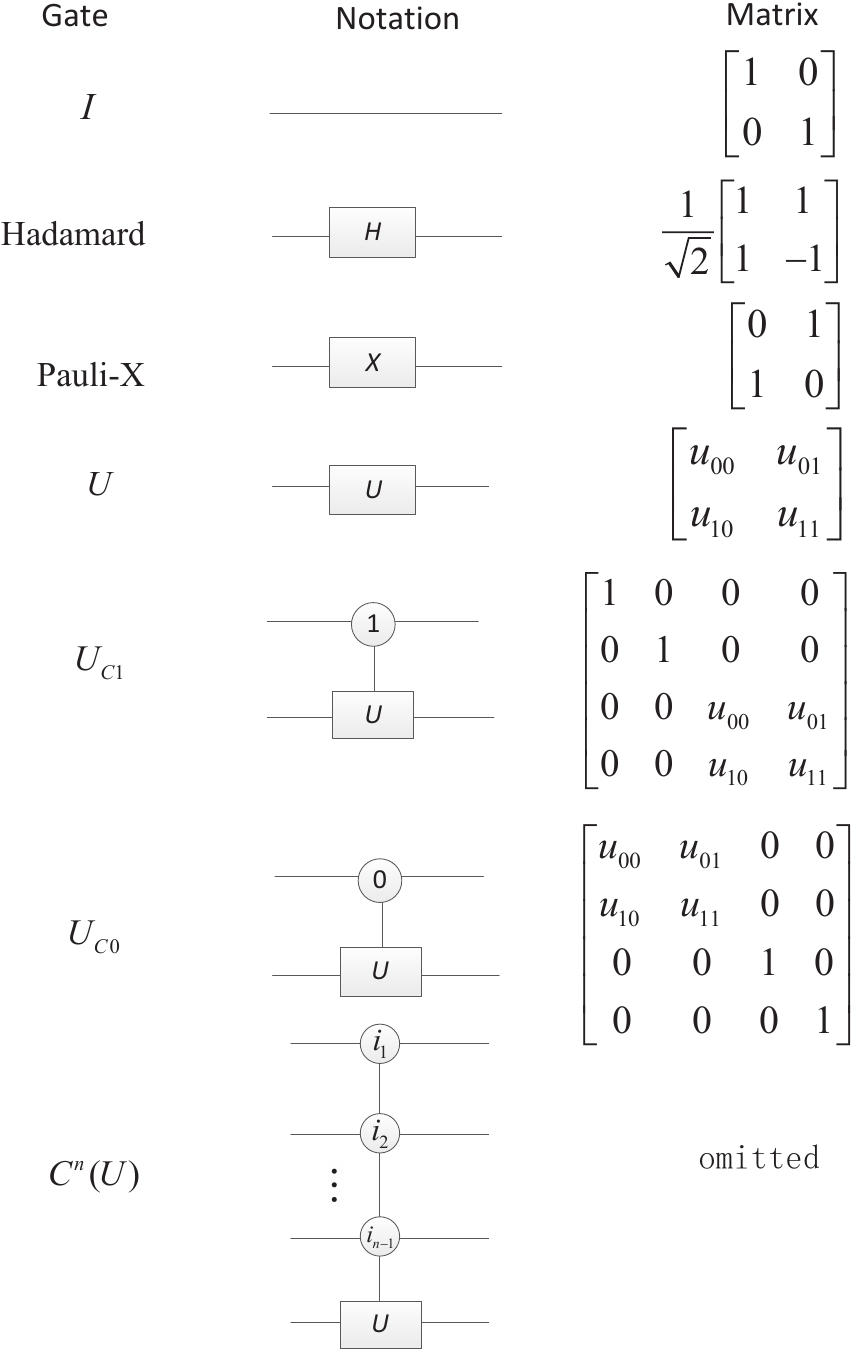}
\caption{Notations for some basic gates with their corresponding matrices. The operator $U$ is unitary and the gate ${C^n}(U)$ is associated with ${i_1},{i_2}, \ldots ,{i_{n - 1}} \in \{ 0,1\}$.}
\label{fig:1}
\end{figure}

Some of the basic gates and their corresponding matrices are shown in Figure \ref{fig:1}.
The identity ($I$), Hadamard ($H$), and Pauli-X ($X$) gates are well-known and can be found in
the basic reference \cite{bibitem13}.
The gates $U$, ${U_{C1}}$,  ${U_{C0}}$, and ${C^n}(U)$ are
the fourth, fifth, sixth, and seventh gates in Figure \ref{fig:1}. They are defined as follows.

Let $\left| \psi \right\rangle$ and $\left| \xi \right\rangle$ be two arbitrary states
\begin{equation}
\label{eqn_315x1}
\left\{ \begin{array}{l}
\left| \psi  \right\rangle  = \alpha \left| 0 \right\rangle  + \beta \left| 1 \right\rangle \\
\left| \xi  \right\rangle  = c\left| 0 \right\rangle  + d\left| 1 \right\rangle
\end{array} \right., \qquad \alpha, \beta, c, d\in\mathbb C
\end{equation}
Then the action of $U$ is given by
\begin{equation}
\label{eqn_315x2}
U\left| \psi  \right\rangle  = \left( {\alpha {u_{00}} + \beta {u_{01}}} \right)\left| 0 \right\rangle  + \left( {\alpha {u_{10}} + \beta {u_{11}}} \right)\left| 1 \right\rangle
\end{equation}

When $U$ is the Pauli-X gate, Eq. (\ref{eqn_315x2}) becomes
\begin{equation}
\label{eqn_315x6}
X\left| \psi  \right\rangle  = X(\alpha \left| 0 \right\rangle  + \beta \left| 1 \right\rangle ) = \beta \left| 0 \right\rangle  + \alpha \left| 1 \right\rangle
\end{equation}

Apply the ${U_{C1}}$ and ${U_{C0}}$ gates to the states $\left| \psi \right\rangle$ and $\left| \xi \right\rangle $ in Eq. (\ref{eqn_315x1}), respectively:
\begin{equation}
\label{eqn_315x3}
{U_{C1}}{\rm{(}}\left| \xi  \right\rangle \left| \psi  \right\rangle ) = c\left| 0 \right\rangle \left| \psi  \right\rangle  + d\left| 1 \right\rangle (U\left| \psi  \right\rangle )
\end{equation}
and
\begin{equation}
\label{eqn_315x4}
{U_{C0}}{\rm{(}}\left| \xi  \right\rangle \left| \psi  \right\rangle ) = c\left| 0 \right\rangle (U\left| \psi  \right\rangle ) + d\left| 1 \right\rangle \left| \psi  \right\rangle
\end{equation}
where $U\left| \psi \right\rangle$ is shown in Eq. (\ref{eqn_315x2}).

When $U$ is the Pauli-X gate, 
then ${U_{C1}}$ and ${U_{C0}}$ are designated as ${N_{C1}}$ and
${N_C}_0$, respectively, which are also called the controlled-NOT, as shown in Figure \ref{fig:2}.
Therefore,
\begin{equation}
\label{eqn_315x7}
{N_{C1}}{\rm{(}}\left| \phi  \right\rangle \left| \psi  \right\rangle ) = c\left| 0 \right\rangle \left| \psi  \right\rangle  + d\left| 1 \right\rangle (X\left| \psi  \right\rangle )
\end{equation}
and
\begin{equation}
\label{eqn_315x8}
{N_{C0}}{\rm{(}}\left| \phi  \right\rangle \left| \psi  \right\rangle ) = c\left| 0 \right\rangle (X\left| \psi  \right\rangle ) + d\left| 1 \right\rangle \left| \psi  \right\rangle
\end{equation}

The ${C^n}(U)$ ($n \ge 3$) gate associated with $i_1, \cdots, i_{n-1}\in \{ 0,1\}$ acts on the $n$-fold tensor product of
the Hilbert space. Its action on
the tensor product of $n$ one-qubit
states $\left| {{x_{\rm{1}}}{x_2} \cdots {x_{n - 1}}} \right\rangle \left| \psi \right\rangle$ is given by 
\begin{equation}
\label{eqn_315x5}
\begin{array}{l}
{C^n}(U)(\left| {{x_1}{x_2} \cdots {x_{n - 1}}} \right\rangle \left| \psi  \right\rangle )\\
 = \left| {{x_1} \cdots {x_{n - 1}}} \right\rangle {U^{f({x_1} \cdots {x_{n - 1}},{i_1} \cdots {i_{n - 1}})}}\left| \psi  \right\rangle
\end{array}
\end{equation}
where 
the function $f=f(x_1\cdots x_{n - 1}, i_1 \cdots i_{n - 1})$ is defined by
\begin{equation}\label{e:function}
\left\{ {\begin{array}{*{20}{c}}
{{f({x_1} \cdots {x_{n - 1}}, {i_1} \cdots {i_{n - 1}})}= 0,}&{{x_1} \cdots {x_{n - 1}}{\rm{ = }}{i_1} \cdots {i_{n - 1}}}\\
{{f({x_1} \cdots {x_{n - 1}}, {i_1} \cdots {i_{n - 1}})}= 1,}&{{x_1} \cdots {x_{n - 1}} \ne {i_1} \cdots {i_{n - 1}}}
\end{array}} \right.
\end{equation}

For $k=1, \cdots, n$, one also defines the gate $C^n(U_k)$ as the composition of $C^n(U)$
and the permutation operator $P_{k, n}$, which switches the $k$th and $n$th factors of arbitrary tensor
product of $n$ one-qubits:
\begin{equation*}
P_{k, n}|j_1 \cdots j_k\cdots j_n\rangle=|j_1 \cdots j_n\cdots j_k\rangle.
\end{equation*}

In particular, for the Pauli X-gate, 
the gate ${C^n}({X_k})$
($k = 1,2, \cdots ,n$) associated with $i_1, \cdots, i_{n-1}\in\{0, 1\}$ is shown in Figure \ref{fig:x2}. Its exact action on the tensor product $\left| {{j_{\rm{1}}} \cdots {j_{k - 1}}{j_k}{j_{k + 1}} \cdots {j_n}} \right\rangle $
is given by
\begin{equation}
\label{eqn_xx1}
\begin{array}{l}
{C^n}({X_k})\left| {{j_{\rm{1}}} \cdots {j_{k - 1}}{j_k}{j_{k + 1}} \cdots {j_n}} \right\rangle \\
 = \left| {{j_{\rm{1}}} \cdots {j_{k - 1}}} \right\rangle ({X^f}\left| {{j_k}} \right\rangle )\left| {{j_{k + 1}} \cdots {j_n}} \right\rangle
\end{array}
\end{equation}
where $f=f(j_1\cdots j_{k-1}j_{k+1}\cdots j_n, i_1\cdots i_{n-1})$ is defined in Eq. (\ref{e:function}).
For example, when $n = 3$ and ${i_1} = {i_3} = 1$, we have ${C^3}({X_2})\left| {{\rm{111}}} \right\rangle {\rm{ = }}\left| {{\rm{101}}} \right\rangle $.

\begin{figure}[!t]
\centering
\includegraphics[width=2in]{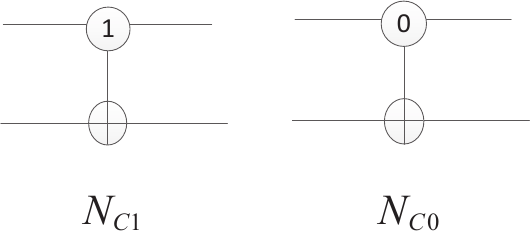}
\caption{Two controlled-NOT gates.}
\label{fig:2}
\end{figure}
\begin{figure}[!t]
\centering
\includegraphics[width=3in]{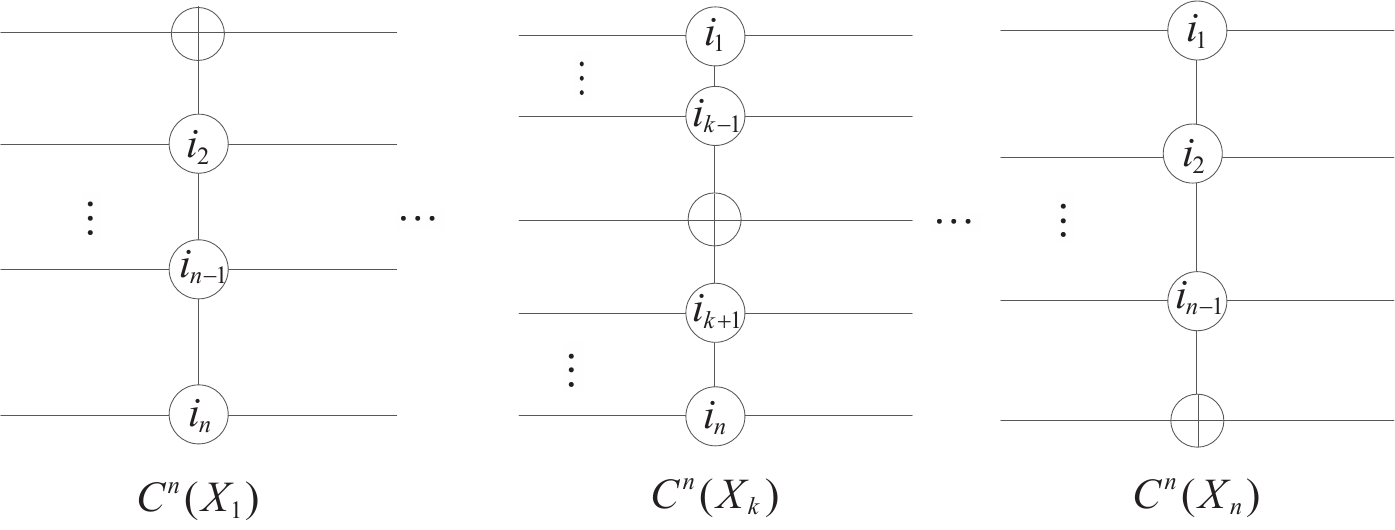}
\caption{Set of $n_k$-qubit controlled-NOT gates ($k = 1,2, \cdots ,n$).}
\label{fig:x2}
\end{figure}

\subsection{\label{sec:level4}Representation of a quantum image}
We recall how NASS represents a multidimensional color image \cite{bibitem40}.

Suppose there are $M$ colors and $Color = \{ colo{r_1},colo{r_2}, \cdots ,colo{r_M}\}$ is
a sorted set of the colors.
We convert the $i$th color to an angle value by
\begin{equation}\label{eqn_x1}
{\phi _i} = \frac{{\pi (i - 1)}}{{2(M - 1)}}
\end{equation}
where $1\leqslant i\leqslant M$. This establishes a one-to-one correspondence ${F_1}$ between the sorted color set and a sorted angle set $\phi=\{\phi_1, \phi_2, \cdots, \phi_M\}$.
For RGB color images, $M = {2^{24}}$, let $x, y, z$ be the values of $ R,G, B$ in 24-bit RGB True Color, and let $i = x \times 256 \times 256 + y \times 256 + z + 1$. Thus, $colo{r_i}$ corresponds to the value ($x,y,z$) of $RGB$. For instance, $colo{r_1}$ and $colo{r_{16777216}}$ correspond to the RGB values (0,0,0) and (255,255,255) respectively.

By a k-dimensional color image, we mean a k-dimenisional lattice of
size $2^{m_1}\times 2^{m_2}\times \cdots \times 2^{m_k}$, where each lattice point is painted with a color.
For simplicity we often assume that the image has spread out the full range
of $0, 1, \cdots, 2^{m_i}-1$ in the ith axis, and if
the actual size is less, we can take the color to be empty in the remaining spots,
and then we enlarge the color set by the empty color.
We only consider the $n$-dimensional skeleton situated in the
positive direction of each axis.
Let $n=m_1+m_2+ \ldots+m_k\ $, then we assign the image a
quantum superposition state in the $2^n$-dimensional Hilbert
space:
 \begin{equation}
\label{eqn_2}
{\left| {{\psi _\phi }} \right\rangle _k}{\rm{ = }}\sum\limits_{i = 0}^{{2^n} - 1} {{a_i}\left| i \right\rangle } {\rm{ = }}\sum\limits_{i = 0}^{{2^n} - 1} {{a_i}\left| {{v_1}} \right\rangle } \left| {{v_2}} \right\rangle  \cdots \left| {{v_k}} \right\rangle \end{equation}
 where $\left| {{v_1}} \right\rangle ,\left| {{v_2}} \right\rangle , \ldots ,\left| {{v_k}} \right\rangle$ correspond
 to the $k$ axes of the image respectively and $a_i$ is the angle value at the site $i$.
The running index $i$ goes through all lattice points of the image, and
 $i = {i_1} i_2\cdots {i_n}$
 is its binary expansion obtained by taking the superposition of $v_1, v_2, \cdots, v_k$, where ${v_1} = {i_1}\cdots {i_{m_1}}$, ${v_{\rm{2}}} = {i_{m_1 + 1}}\cdots {i_{m_1+m_2}}$ and ${v_k} = {i_{n-m_k+1} \cdots {i_n}}$
 are in their binary expansions. 
 Thus
 $\left| i \right\rangle  = \left| {{v_1}} \right\rangle \left| {{v_2}} \right\rangle  \cdots \left| {{v_k}} \right\rangle $ is the coordinate $\left( {{v_1},{v_2}, \cdots ,{v_k}} \right)$ of $i$ in the k-dimensional space
 and ${a_i}\in \phi $ is the value of the color
 at site $i$. 

We define the NASS state of the image to be the normalized state
\begin{equation}
\label{eqn_3}
{\left| \psi  \right\rangle _k}{\rm{ = }}\sum\limits_{i = 0}^{{2^n} - 1} {{\theta _i}\left| {{v_1}} \right\rangle } \left| {{v_2}} \right\rangle  \cdots \left| {{v_k}} \right\rangle
\end{equation}
where $\theta_i=a_i/\sqrt {\sum\nolimits_{y = 0}^{{2^n} - 1} {a_y^2} }$. Note that the magnitude square
$\sum\nolimits_{y = 0}^{{2^n} - 1} {a_y^2}$ is a constant for the image, so
$\theta_i$ can replace $a_i$ to represent the color at the site $i$.
In this way the NASS state ${\left| \psi \right\rangle _k} $ in Eq. (\ref{eqn_3}) represents a k-dimensional color image.


Each axis $|v_i\rangle$ spans a Hilbert space of dimension $2^{m_i}$. In view of its role in the image, 
we abuse the notation to denote
$\dim(|v_i\rangle)=m_i$ and also call it the size of the state $|v_i\rangle$.
e.g., $\dim (\left| {000} \right\rangle ) = 3$, and we have
 \begin{equation}
\label{eqn_316x4}
\dim ({\left| \psi \right\rangle _k}) = \sum\limits_{i = 1}^k {\dim (\left| {{v_i}} \right\rangle )} = n
\end{equation}
where ${\left| \psi \right\rangle _k}$ is the NASS state shown in Eq. (\ref{eqn_3}).

To make Eq. (\ref{eqn_3}) clearer,
let us consider $\dim (\left| {{v_i}} \right\rangle ) = {m_i}$, $i = 1,2, \ldots, k$ as an example. The NASS state representing a multidimensional color image with ${2^n}$ pixels is written explicitly as follows.
\begin{equation}
\label{eqn_316x5}
\begin{array}{l}
{\left| \psi  \right\rangle _k} {\rm{ = }}\sum\limits_{i = 0}^{{2^n} - 1} {{\theta _i}\left| {{i_1} \cdots {i_{{m_1}}}} \right\rangle }  \cdots \left| {{i_{(\sum\limits_{h = 1}^{j - 1} {{m_h}} ) + 1}} \cdots {i_{(\sum\limits_{h = 1}^{j - 1} {{m_h}} ) + {m_j}}}} \right\rangle \\
 \cdots \left| {{i_{(\sum\limits_{h = 1}^{k - 1} {{m_h}} ) + 1}} \cdots {i_{(\sum\limits_{h = 1}^{k} {{m_h}} )}}} \right\rangle
\end{array}
\end{equation}
where $1 \le j \le k$ and the binary expansion of the integer $i$ is
\begin{equation}
\label{eqn_316x6}
\begin{array}{l}
i = {i_1} \cdots {i_{{m_1}}} \cdots {i_{(\sum\limits_{h = 1}^{j - 1} {{m_h}} ) + 1}} \cdots {i_{(\sum\limits_{h = 1}^{j - 1} {{m_h}} ) + {m_j}}}\\
 \cdots {i_{(\sum\limits_{h = 1}^{k - 1} {{m_h}} ) + 1}} \cdots {i_{(\sum\limits_{h = 1}^{k - 1} {{m_h}} ) + {m_k}}}
\end{array}
\end{equation}

 Substituting $n = 5$, $k = 2$, ${m_1} = 2$ and ${m_{\rm{2}}} = {\rm{3}}$ into Eq. (\ref{eqn_316x5}), we obtain
\begin{equation}
\label{eqn_x5}
\begin{array}{l}
{\left| \psi  \right\rangle _2}{\rm{ = }}\sum\limits_{i = 0}^{{2^5} - 1} {{\theta _i}\left| {{i_1}{i_{\rm{2}}}} \right\rangle } \left| {{i_{\rm{3}}}{i_{\rm{4}}}{i_{\rm{5}}}} \right\rangle {\rm{ = }}{\theta _0}\left| {00} \right\rangle \left| {000} \right\rangle \\
 + {\theta _1}\left| {00} \right\rangle \left| {001} \right\rangle  +  \cdots  + {\theta _{30}}\left| {11} \right\rangle \left| {110} \right\rangle  + {\theta _{31}}\left| {11} \right\rangle \left| {111} \right\rangle
\end{array}
\end{equation}
which represents the 2D color image shown in Figure \ref{fig:x4}.

\begin{figure}[!h]
\centering
\includegraphics[width=2in]{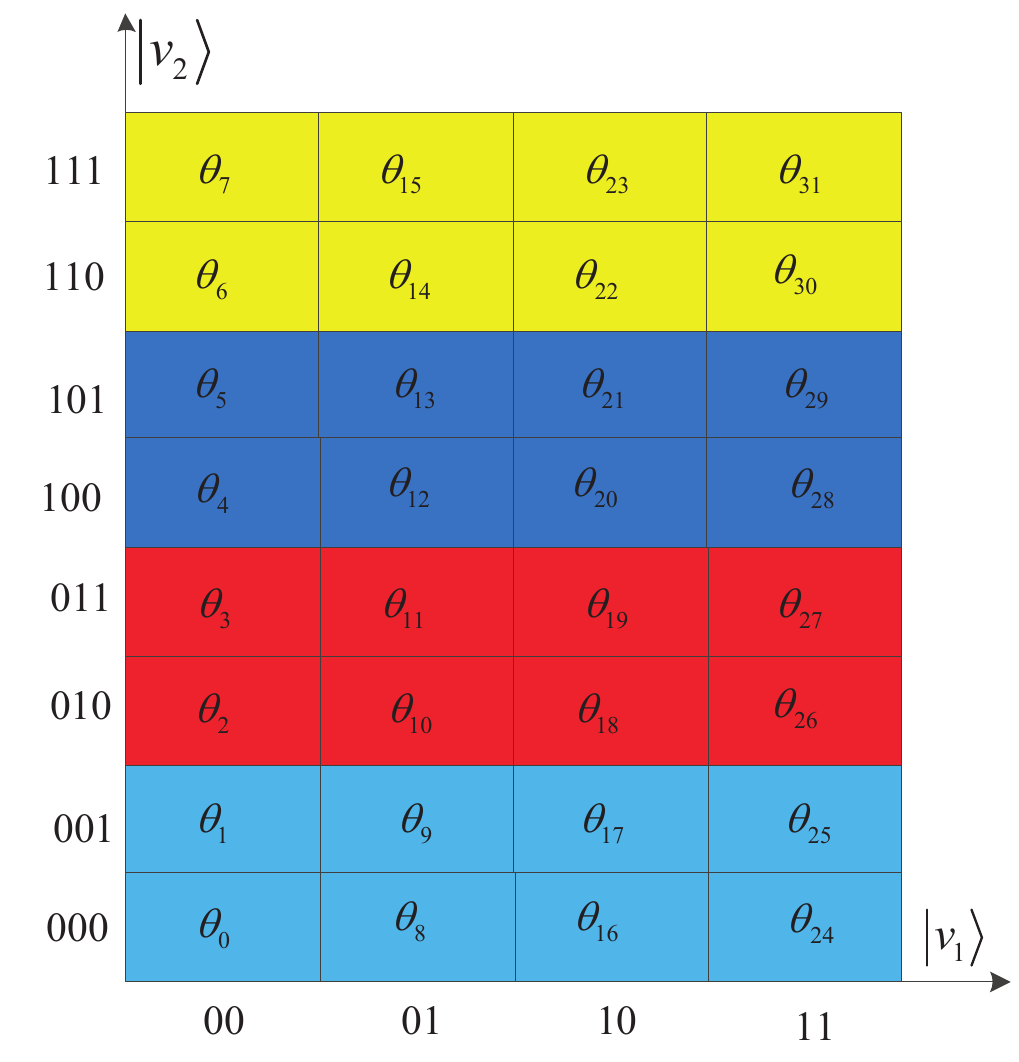}
\caption{Two-dimensional color image of $4 \times 8$.}
\label{fig:x4}
\end{figure}

 Substituting $n = 5$, $k = 3$, ${m_1} = 2$, ${m_2} = 2$ and ${m_{\rm{3}}} = {\rm{1}}$ into Eq. (\ref{eqn_316x5}), we obtain
\begin{equation}
\label{eqn_x6}
\begin{array}{l}
{\left| \psi  \right\rangle _3}{\rm{ = }}\sum\limits_{i = 0}^{{2^5} - 1} {{\theta _i}\left| {{i_1}{i_{\rm{2}}}} \right\rangle } \left| {{i_{\rm{3}}}{i_{\rm{4}}}} \right\rangle \left| {{i_{\rm{5}}}} \right\rangle {\rm{ = }}{\theta _0}\left| {00} \right\rangle \left| {00} \right\rangle \left| 0 \right\rangle \\
 + {\theta _1}\left| {00} \right\rangle \left| {00} \right\rangle \left| 1 \right\rangle  +  \cdots  + {\theta _{30}}\left| {11} \right\rangle \left| {11} \right\rangle \left| 0 \right\rangle  + {\theta _{31}}\left| {11} \right\rangle \left| {11} \right\rangle \left| 1 \right\rangle
\end{array}
\end{equation}
which represents the 3D color image shown in Figure \ref{fig:x5}.

\begin{figure}[!h]
\centering
\includegraphics[width=3in]{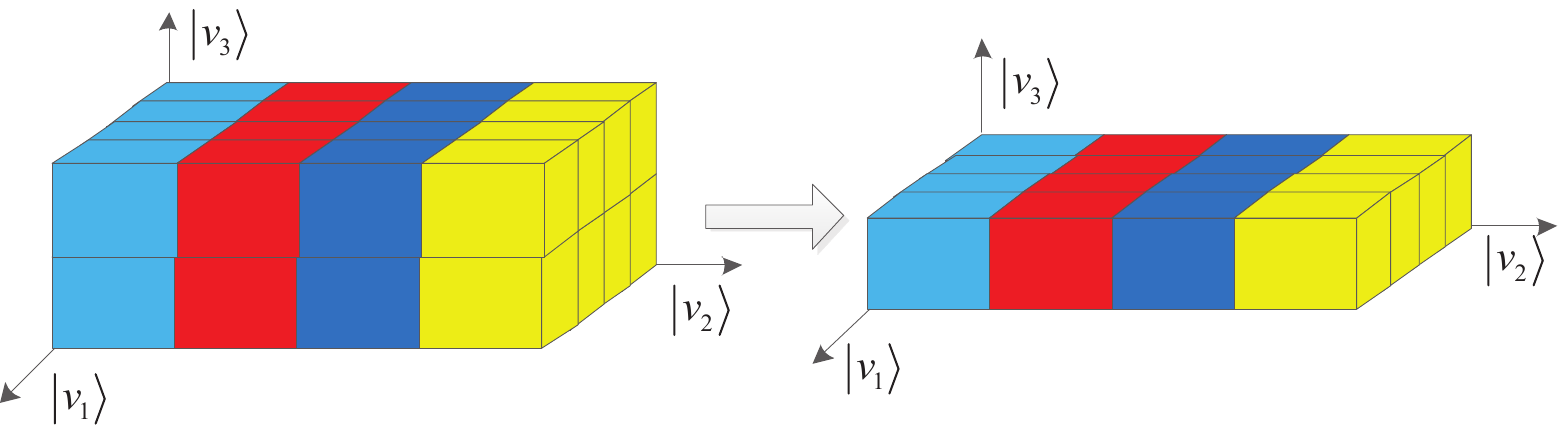}
\caption{3D color image of $4 \times 4 \times 2$. In the lower sub-image on the right-hand side, $\left| {{i_5}} \right\rangle = \left| 0 \right\rangle $.}
\label{fig:x5}
\end{figure}

 Substituting $n = 5$, $k = 5$, and ${m_{\rm{1}}} = {m_{\rm{2}}}{\rm{ = }}{m_{\rm{3}}}{\rm{ = }}{m_{\rm{4}}}{\rm{ = }}{m_{\rm{5}}}{\rm{ = 1}}$ into Eq. (\ref{eqn_316x5}), we obtain
\begin{equation}
\label{eqn_x7}
\begin{array}{l}
\left| {{\psi _{\rm{5}}}} \right\rangle {\rm{ = }}\sum\limits_{i = 0}^{{2^5} - 1} {{\theta _i}\left| {{i_1}} \right\rangle } \left| {{i_{\rm{2}}}} \right\rangle \left| {{i_{\rm{3}}}} \right\rangle \left| {{i_{\rm{4}}}} \right\rangle \left| {{i_{\rm{5}}}} \right\rangle \\
 = {\theta _0}\left| 0 \right\rangle \left| 0 \right\rangle \left| 0 \right\rangle \left| 0 \right\rangle \left| 0 \right\rangle  + {\theta _1}\left| 0 \right\rangle \left| 0 \right\rangle \left| 0 \right\rangle \left| 0 \right\rangle \left| 1 \right\rangle \\
 +  \cdots  + {\theta _{30}}\left| 1 \right\rangle \left| 1 \right\rangle \left| 1 \right\rangle \left| 1 \right\rangle \left| 0 \right\rangle  + {\theta _{31}}\left| 1 \right\rangle \left| 1 \right\rangle \left| 1 \right\rangle \left| 1 \right\rangle \left| 1 \right\rangle
\end{array}
\end{equation}
which represents the five-dimensional (5D) color image shown in Figure \ref{fig:x6}.

\begin{figure}[!h]
\centering
\includegraphics[width=3.5in]{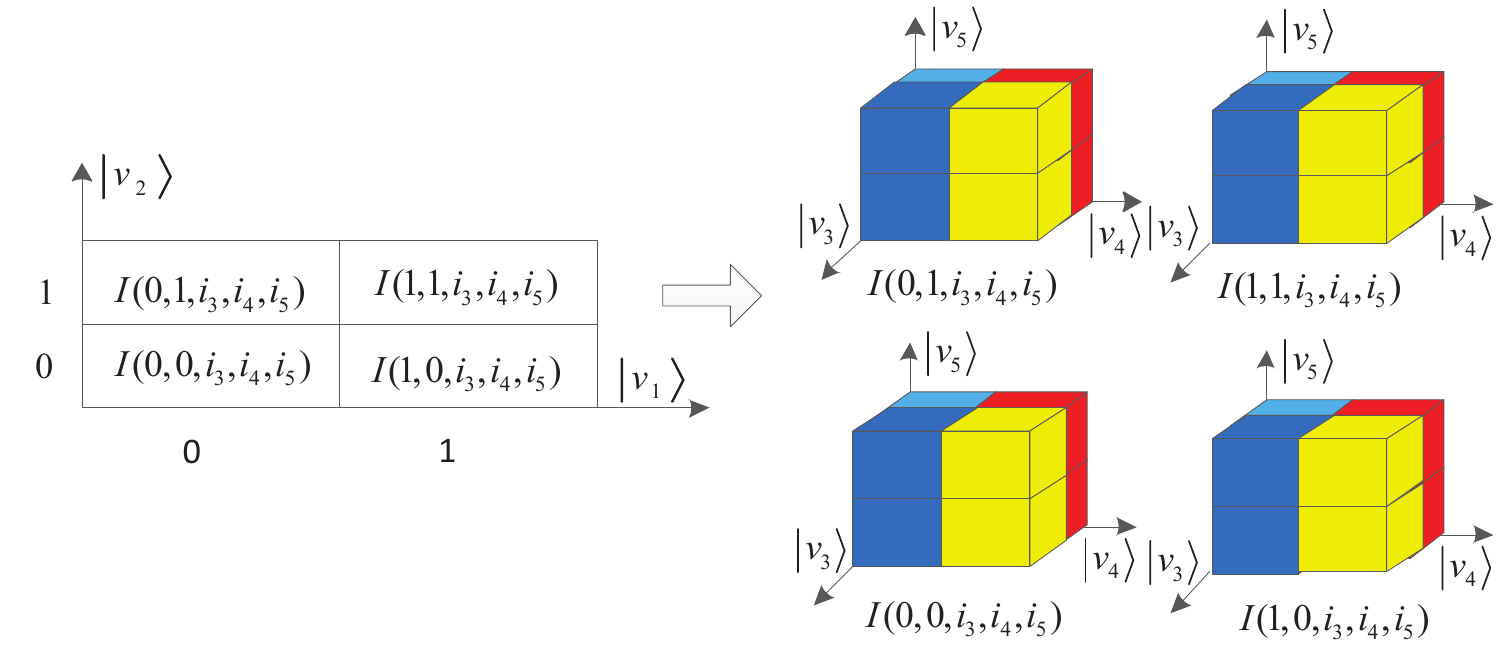}
\caption{5D color image of $2 \times 2 \times 2 \times 2 \times 2$. The image is divided into four sub-images, i.e.,
$I(0,0,{i_3},{i_4},{i_5})$, $I(0,1,{i_3},{i_4},{i_5})$, $I(1,0,{i_3},{i_4},{i_5})$, and $I(1,1,{i_3},{i_4},{i_5})$. The projections of the four sub-images into the 3D space spanned by $\left| {{v_3}} \right\rangle \times \left| {{v_4}} \right\rangle \times \left| {{v_5}} \right\rangle $ are shown on the right.}
\label{fig:x6}
\end{figure}

\section{\label{sec:level6}Geometric transformations for multidimensional color images}
Geometric transformations on quantum images \cite{bibitem8}, such as two-point swappings, flips, and
orthogonal rotations, can be applied to 2D images based on the FRQI state. Using 
\cite{bibitem8,bibitem19}, we discuss
geometric transformations 
for multidimensional color images based on the NASS state.

\subsection{\label{sec:level61} Two-point swappings}
\begin{definition}
\label{def_2}
A two-point swapping 
${G_T}$ for a k-dimensional image  is the linear operator defined by
\begin{equation}
\label{eqn_11}
{G_T} = \left| s \right\rangle \left\langle t \right| + \left| t \right\rangle \left\langle s \right| + \sum\limits_{i = 0,i \ne s,t}^{{2^n} - 1} {\left| i \right\rangle \left\langle i \right|}
\end{equation}
where $\left| {\rm{s}} \right\rangle = \left| {v_1^s} \right\rangle \left| {v_2^s} \right\rangle \cdots \left| {v_k^s} \right\rangle $ and $\left| t \right\rangle = \left| {v_1^t} \right\rangle \left| {v_2^t} \right\rangle \cdots \left| {v_k^t} \right\rangle $ are the coordinates of the two swapped pixels, and $\left| i \right\rangle = \left| {{v_1}} \right\rangle \left| {{v_2}} \right\rangle \cdots \left| {{v_k}} \right\rangle $ run
through the coordinates of the other pixels.
\end{definition}

Since ${G_T}.G_T^ + = {I^{ \otimes n}}$, ${G_T}$ is a unitary operator.
Applying ${G_T}$ to the NASS state $\left| \psi \right\rangle_{k}$ in Eq. (\ref{eqn_3}), we get
\begin{equation}
\label{eqn_12}
\begin{array}{l}
{G_T}({\left| \psi  \right\rangle _k}) = \sum\limits_{i = 0}^{{2^n} - 1} {{\theta _i}{G_T}(\left| i \right\rangle } )\\
 = {\theta _s}\left| t \right\rangle  + {\theta _t}\left| s \right\rangle  + \sum\limits_{i = 0,i \ne s,t}^{{2^n} - 1} {{\theta _i}\left| i \right\rangle }
\end{array}
\end{equation}
That is, ${G_T}$ swaps two colors of a $k$-dimensional color image. 
We can use
Gray codes \cite{bibitem13} to design a quantum circuit for the operator ${G_T}$ of
two-point swapping. Suppose that $s$ and $t$ are two distinct binary numbers, then a Gray code
connecting $s$ and $t$ is a sequence of binary numbers that starts with $s$ and ends with $t$
such that adjacent binary numbers in the list differ by exactly one bit. For example, a Gray code connecting
$s = 0 \cdots 0 \cdots 0$ and $t = 1 \cdots 1 \cdots 1$ is
\begin{equation}
\label{eqn_318x1}
\begin{array}{*{20}{c}}
0& \cdots &0& \cdots &0\\
0& \cdots &0& \cdots &1\\
 \vdots &{}& \vdots &{}& \vdots \\
0& \cdots &1& \cdots &1\\
 \vdots &{}& \vdots &{}& \vdots \\
1& \cdots &1& \cdots &1
\end{array}
\end{equation}

Suppose $s$ and $t$ are two binary numbers with at most $n$ digits different.
Then we can find a Gray code $g_1, g_2, \ldots, g_m$ to connect $s$ and $t$
with $m (\le n+1)$ elements,
where ${g_1} = s$ and ${g_m} = t$.
Since the elements ${g_i}$ and ${g_{i + 1}}$ ($1 \le i \le m - 1$) differ at only one location, we can implement the transformation $\left| {{g_i}} \right\rangle \to \left| {{g_{i + 1}}} \right\rangle $ by the ${C^n}({X_k})$ gate shown in Figure \ref{fig:x2}. For example, when $\left| {{g_2}} \right\rangle = \left| {0 \cdots 01} \right\rangle $ and $\left| {{g_3}} \right\rangle = \left| {0 \cdots 11} \right\rangle $, where $\left| {{g_2}} \right\rangle $ and $\left| {{g_3}} \right\rangle $ are two elements of the Gray code in Eq.  (\ref{eqn_318x1}), the ${C^n}({X_{n - 1}})$ gate in Figure \ref{fig:x2} sends $\left| {{g_2}} \right\rangle$ to  $\left| {{g_3}} \right\rangle $.

To understand implementation of the quantum circuits for two-point swapping more clearly, let us consider a $k$-dimensional color image
 with ${{\rm{2}}^n}$ pixels as an example. Suppose that the NASS state ${\left| \psi \right\rangle _k}$ in Eq. (\ref{eqn_316x5}) represents the image. In addition, assume that $\left| s \right\rangle = \left| 0 \right\rangle$ and $\left| t \right\rangle = \left| {{2^n} - 1} \right\rangle $ are the coordinates of the two swapped pixels. The Gray code that connects $s$ and $t$ is shown in Eq. (\ref{eqn_318x1}), where
 ${g_1},{g_2}, \ldots {g_{n + 1}}$ are the elements of the Gray code, $\left| {{g_1}} \right\rangle = \left| s \right\rangle = \left| 0 \right\rangle $, and $\left| {{g_{n + 1}}} \right\rangle = \left| t \right\rangle = \left| {{2^n} - 1} \right\rangle $. We can achieve the two-point swapping of the $k$-dimensional color image by implementing the transformations as follows (a proof is given in Appendix A).
 \begin{equation}
\label{eqn_318x2}
\left\{ \begin{array}{l}
\left| {{g_1}} \right\rangle  \to \left| {{g_2}} \right\rangle  \to  \cdots  \to \left| {{g_{n + 1}}} \right\rangle \\
\left| {{g_n}} \right\rangle  \to \left| {{g_{n - 1}}} \right\rangle  \to  \cdots  \to \left| {{g_1}} \right\rangle
\end{array} \right.
\end{equation}
 The quantum circuit for 
 the transformations in Eq. (\ref{eqn_318x2}) is shown in Figure \ref{fig:8x1}.
 \begin{figure}[!h]
\centering
\includegraphics[width=3.5in]{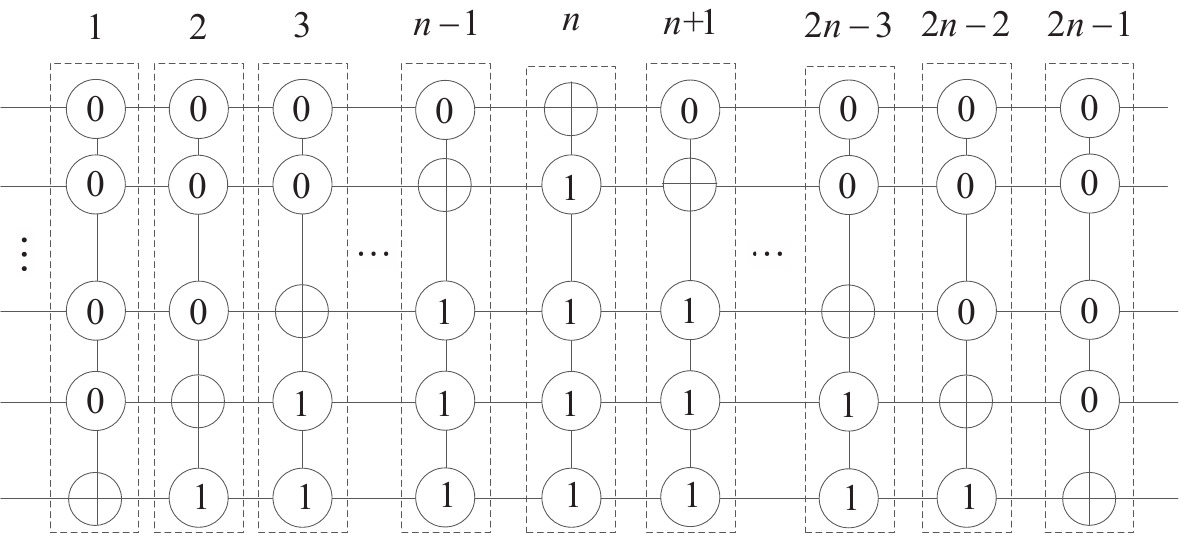}
\caption{The implementation of a two-point swapping of a k-dimensional color image, where the coordinates of two points are $\left| 0 \right\rangle $ and $\left| {{2^n} - 1} \right\rangle$. The dashed box $i$ ($1 \le i \le n$) and $n+j$ ($1 \le j \le n - 1$) implement the transformations $\left| {{g_i}} \right\rangle \to \left| {{g_{i + 1}}} \right\rangle $ and $\left| {{g_{n + 1 - j}}} \right\rangle \to \left| {{g_{n - j}}} \right\rangle $, respectively.}
\label{fig:8x1}
\end{figure}

Since there are more than one Gray codes connecting $s$ and $t$, the implementation
of the two-point swapping of $s$ and $t$ can be done by
more than one quantum circuits. For example, if $s = 00101$ and $t = 11110$, two Gray codes are as follows.
\begin{equation}
\label{eqn_13}
\begin{array}{*{20}{c}}
0&0&1&0&1\\
1&0&1&0&1\\
1&1&1&0&1\\
1&1&1&1&1\\
1&1&1&1&0
\end{array}
\end{equation}
and
\begin{equation}
\label{eqn_o13}
\begin{array}{*{20}{c}}
0&0&1&0&1\\
0&0&1&0&0\\
0&0&1&1&0\\
0&1&1&1&0\\
1&1&1&1&0
\end{array}
\end{equation}

Using the Gray code in Eq. (\ref{eqn_13}), we can implement the two-point swapping of the 3D color image in Figure \ref{fig:x5}, where the coordinates of the two pixels are $\left| s \right\rangle = \left| {00} \right\rangle \left| {10} \right\rangle \left| 1 \right\rangle $ and $\left| t \right\rangle = \left| {11} \right\rangle \left| {11} \right\rangle \left| 0 \right\rangle $, i.e., $(0,2,1)$ and $(3,3,0)$. The quantum circuit and the result of the two-point swapping are shown in Figure \ref{fig:6}.
\begin{figure}[!h]
\centering
\includegraphics[width=3.5in]{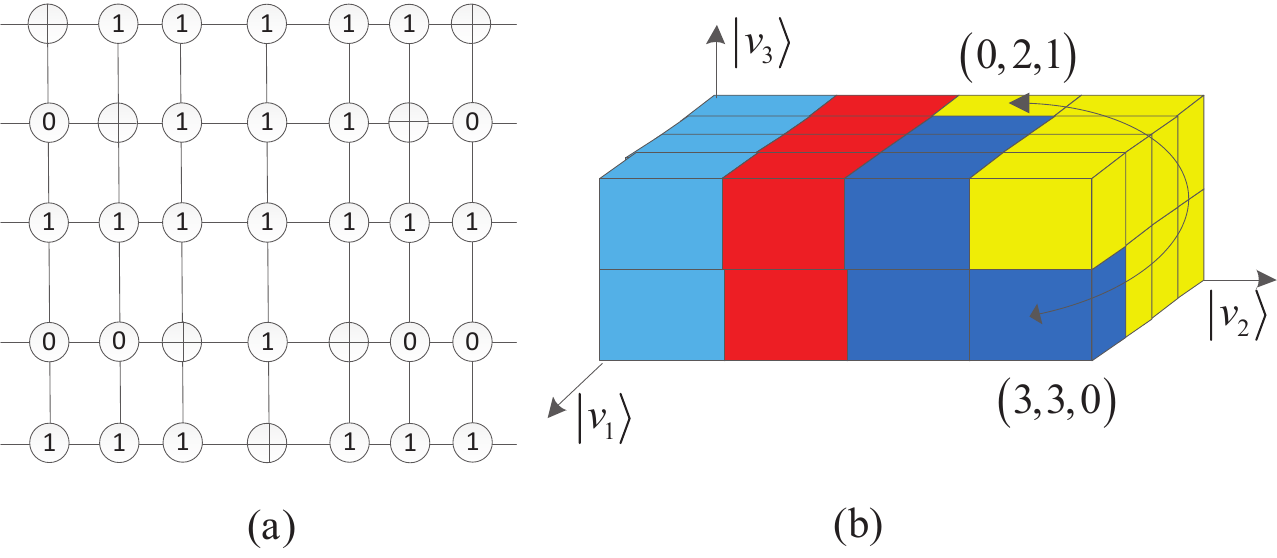}
\caption{Implementation of the two-point swapping of a 3D color image. (a) The quantum circuit for two-point swapping. (b) The swapped image.}
\label{fig:6}
\end{figure}

Similarly, we employ the Gray code in Eq. (\ref{eqn_o13}) to achieve the swapping of $\left| {00101} \right\rangle $ (i.e., (0,0,1,0,1)) and $\left| {11110} \right\rangle $ (i.e., (1,1,1,1,0)) in the 5D color image in Figure \ref{fig:x6}.  The quantum circuit and the result of the two-point swapping are shown in Figure \ref{fig:x10}.
\begin{figure}[!h]
\centering
\includegraphics[width=3.5in]{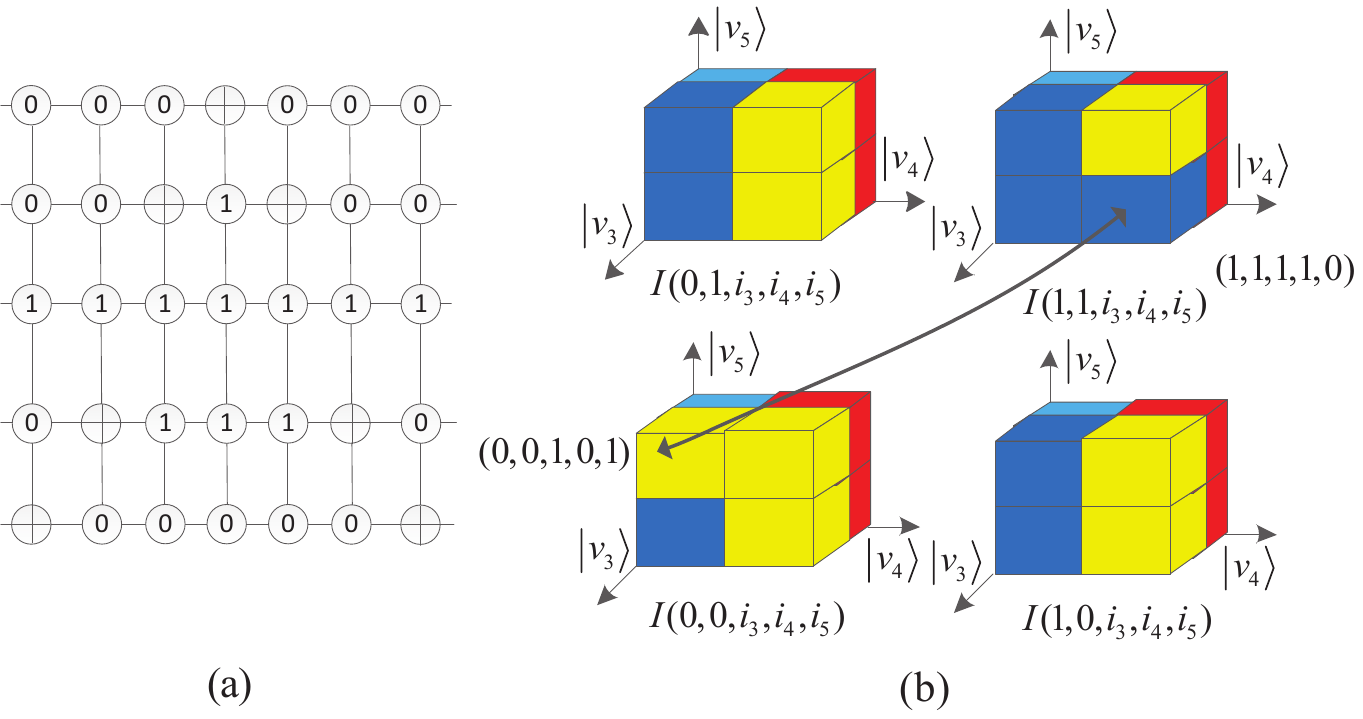}
\caption{Implementation of the two-point swapping of a 5D color image. (a) The quantum circuit for two-point swapping. (b) The swapped image.}
\label{fig:x10}
\end{figure}

As any quantum circuit can be built using single-qubit and controlled-NOT gates \cite{bibitem13},
we introduce the following notion.
\begin{definition}
\label{def_8x4}
The {\it complexity} of a quantum circuit is the total number of single-qubit and controlled-NOT gates
(i.e., ${N_{C1}}$ in Figure \ref{fig:2})) in the quantum circuit.
\end{definition}

\begin{theorem}
\label{theorem1}
 Let $s={g_1},{g_2}, \cdots ,{g_m} = t$ be the $n$-bit elements of a Gray code connecting $s$ and $t$, where $\left| s \right\rangle $ and $\left| t \right\rangle $ are the coordinates of the two swapped pixels in ${G_T}$. The transformations $\left| {{{\rm{g}}_1}} \right\rangle \to \left| {{g_2}} \right\rangle \to \cdots\to \left| {{g_m}} \right\rangle $ and $\left| {{{\rm{g}}_{m - 1}}} \right\rangle \to \left| {{g_{m - 2}}} \right\rangle \to \cdots \to\left| {{g_1}} \right\rangle $ can be achieved by a sequence of ${C^n}({X_k})$ gates as shown in Figure \ref{fig:x2}. The final result is an implementation of two-point swapping operators ${G_T}$ with complexity $O({n^2})$.
\end{theorem}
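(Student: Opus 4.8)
The plan is to reduce the whole statement to a computation in the symmetric group acting on the $2^n$ computational basis states, exploiting the fact that each dashed box in Figure \ref{fig:8x1} realizes a single transposition of basis vectors. First I would record the elementary observation underlying the construction: if $g_i$ and $g_{i+1}$ differ in exactly one bit, say position $k_i$, and agree on the remaining $n-1$ bits in a pattern $p_i$, then the gate $C^n(X_{k_i})$ whose control string $i_1\cdots i_{n-1}$ is set to $p_i$ flips the $k_i$-th qubit precisely on those basis states whose other coordinates equal $p_i$. By Eq. (\ref{eqn_xx1}) this gate therefore swaps $|g_i\rangle$ and $|g_{i+1}\rangle$ and fixes every one of the other $2^n-2$ basis states; I will denote this transposition $T_i=(g_i\ g_{i+1})$.

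Next I would read off from Eq. (\ref{eqn_318x2}) the full ordered list of transpositions the circuit applies. The forward chain $|g_1\rangle\to\cdots\to|g_m\rangle$ contributes $T_1,\ldots,T_{m-1}$, and the backward chain $|g_{m-1}\rangle\to\cdots\to|g_1\rangle$ contributes $T_{m-2},\ldots,T_1$, so the circuit word is palindromic, $T_1 T_2\cdots T_{m-2}\,T_{m-1}\,T_{m-2}\cdots T_2 T_1$. Writing $P=T_1 T_2\cdots T_{m-2}$, this is exactly $P\,T_{m-1}\,P^{-1}$. Since a conjugate of a transposition is again a transposition, $P T_{m-1}P^{-1}=(P(g_{m-1})\ P(g_m))$, so the entire product collapses to evaluating $P$ on the two endpoints. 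A short telescoping calculation gives $P(g_m)=g_m$ (no factor $T_j$ with $j\le m-2$ moves $g_m$) and $P(g_{m-1})=g_1$ (applying $T_{m-2}$ first sends $g_{m-1}\mapsto g_{m-2}$, then $T_{m-3}$ sends $g_{m-2}\mapsto g_{m-3}$, and so on down to $T_1$ sending $g_2\mapsto g_1$). Hence the circuit implements the single transposition $(g_1\ g_m)=(s\ t)$, which is precisely $G_T$ of Eq. (\ref{eqn_11}); all other basis states are fixed automatically, since the operator is a conjugate of one transposition.

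For the complexity count I would argue in two layers. Because the Gray code has $m\le n+1$ elements, the circuit uses at most $(m-1)+(m-2)\le 2n-1=O(n)$ gates of the form $C^n(X_k)$. Each such gate is a generalized controlled-NOT with at most $n-1$ controls, and by the standard decomposition of multiply-controlled gates into single-qubit and controlled-NOT gates \cite{bibitem13} each one costs $O(n)$ elementary gates in the sense of Definition \ref{def_8x4}. Multiplying the two estimates yields a total complexity of $O(n)\cdot O(n)=O(n^2)$.

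The parts I expect to demand the most care are bookkeeping rather than conceptual: pinning down the control pattern $p_i$ so that $C^n(X_{k_i})$ acts as the clean global transposition $T_i$ and disturbs no unintended basis state, and checking the operator-ordering convention so that the left-to-right circuit of Figure \ref{fig:8x1} genuinely produces the palindromic word $P T_{m-1}P^{-1}$ (and not some reversal). Once the word is recognized as a conjugate of a single transposition, its identification with $G_T$ is forced, and the $O(n^2)$ bound follows directly from the cited gate-decomposition estimate.
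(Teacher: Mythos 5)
Your proposal is correct and takes essentially the same route as the paper's Appendix A: the same palindromic gate word $C_1C_2\cdots C_{m-2}C_{m-1}C_{m-2}\cdots C_1$ built from $C^n(X_k)$ gates, the same key observation that each gate transposes $|g_i\rangle$ and $|g_{i+1}\rangle$ while fixing all other basis states, and the same complexity count ($2m-3\le 2n-1$ gates, each decomposed into $O(n)$ single-qubit and controlled-NOT gates, giving $O(n^2)$). The only difference is presentational: where the paper asserts that this word realizes the two chains of Eq. (\ref{eqn_318x2}) and hence Eq. (\ref{eqn_19x6}), you justify that step by recognizing the word as the conjugate $P\,T_{m-1}\,P^{-1}=(P(g_{m-1})\ P(g_m))=(g_1\ g_m)$ in the symmetric group, which also disposes of the operator-ordering worry automatically since a palindrome equals its own reversal.
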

\begin{proof}
See Appendix A.
\end{proof}

\subsection{\label{sec:level62} Flip transformations}

For any vector $\left| v \right\rangle = \left| {{j_1}{j_2} \cdots {j_{{m}}}} \right\rangle $,
we define
$\left| {\overline {{v}} } \right\rangle = \left| {{{\bar j}_1}{{\bar j}_2} \cdots {{\bar j}_{{m}}}} \right\rangle$, where ${\bar j_h} = 1 - {j_h}$, $h = 1,2, \cdots {m}$.
\begin{definition}
\label{def_4}
A symmetric flip $G_F^{\left| {{v_j}} \right\rangle }$ for the NASS state ${\left| \psi \right\rangle _k}$ along the $\left| {{v_j}} \right\rangle $ axis is the linear operator
\begin{equation}
\label{eqn_161}
G_F^{\left| {{v_j}} \right\rangle }({\left| \psi  \right\rangle _k}) = \sum\limits_{i = 0}^{{2^n} - 1} {{\theta _i}} \left| {\overline {{v_1}} } \right\rangle  \cdots \left| {\overline {{v_{j - 1}}} } \right\rangle \left| {v{}_j} \right\rangle \left| {\overline {{v_{j + 1}}} } \right\rangle  \cdots \left| {\overline {{v_k}} } \right\rangle
\end{equation}
where ${\left| \psi \right\rangle _k}$ is
a $k$-dimensional color image (see Eq. (\ref{eqn_3})), $\left| {{v_1}} \right\rangle , \ldots ,\left| {{v_k}} \right\rangle $ are the $k$ axes in a $k$-dimensional space.
\end{definition}

For example, by applying $G_F^{\left| {{v_j}} \right\rangle }$ to the NASS state ${\left| \psi \right\rangle _k}$ in Eq. (\ref{eqn_316x5}), the result is
\begin{equation}
\label{eqn_20X1}
\begin{array}{l}
G_F^{\left| {{v_j}} \right\rangle }({\left| \psi  \right\rangle _k})\\
 = \sum\limits_{i = 0}^{{2^n} - 1} {{\theta _i}\left| {{{\bar i}_1} \cdots {{\bar i}_{{m_1}}}} \right\rangle }  \cdots \left| {{i_{(\sum\limits_{h = 1}^{j - 1} {{m_h}} ) + 1}} \cdots {i_{(\sum\limits_{h = 1}^{j - 1} {{m_h}} ) + {m_j}}}} \right\rangle \\
 \cdots \left| {{{\bar i}_{(\sum\limits_{h = 1}^{k - 1} {{m_h}} ) + 1}} \cdots {{\bar i}_{(\sum\limits_{h = 1}^{k - 1} {{m_h}} ) + {m_k}}}} \right\rangle
\end{array}
\end{equation}
where $\dim (\left| {{v_l}} \right\rangle ) = {m_l}$, $l = 1,2, \ldots ,k$.

The symmetric flip $G_F^{\left| {{v_j}} \right\rangle }$ in Eq. (\ref{eqn_20X1}) is also expressed as
\begin{equation}
\label{eqn_20X2}
G_F^{\left| {{v_j}} \right\rangle }{\rm{ = }}{X^ \otimes }^{{m_1}} \otimes  \cdots {X^{ \otimes {m_{j - 1}}}} \otimes {I^{ \otimes {m_j}}} \otimes {X^{ \otimes {m_{j + 1}}}} \cdots  \otimes {X^{ \otimes {m_k}}}
\end{equation}
where $X$ is the Pauli spin operator (see Figure \ref{fig:1}). The implementation of the operator $G_F^{\left| {{v_j}} \right\rangle }$ in Eq. (\ref{eqn_20X2}) is shown in Figure \ref{fig:8}. Suppose that $n - \dim (\left| {{v_j}} \right\rangle ) = m$, we know that the implementation of the operator $G_F^{\left| {{v_j}} \right\rangle }$ requires $m$ Pauli-X gates, where $m<n$, i.e., the complexity of the operator $G_F^{\left| {{v_j}} \right\rangle }$ is $O(n)$.

\begin{figure}[!h]
\centering
\includegraphics[width=2in]{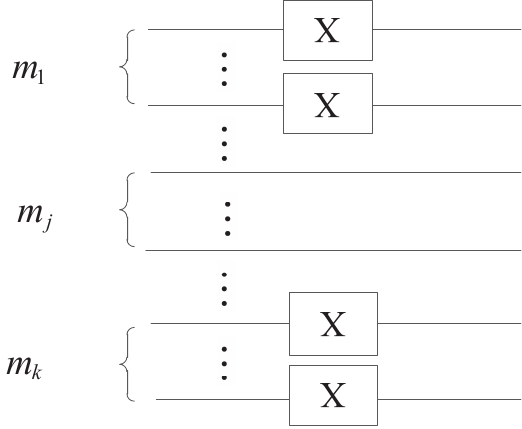}
\caption{Implementation of the operator $G_F^{\left| {{v_j}} \right\rangle }$.}
\label{fig:8}
\end{figure}

Substituting $n = 5$, $k = 5$, $j = 1$, and ${m_{\rm{1}}} = {m_{\rm{2}}}{\rm{ = }}{m_{\rm{3}}}{\rm{ = }}{m_{\rm{4}}}{\rm{ = }}{m_{\rm{5}}}{\rm{ = 1}}$ into Eq. (\ref{eqn_20X1}), we obtain
\begin{equation}
\label{eqn_20X3}
\begin{array}{l}
G_F^{\left| {{v_1}} \right\rangle }({\left| \psi  \right\rangle _k}) = \sum\limits_{i = 0}^{{2^n} - 1} {{\theta _i}\left| {{i_1}} \right\rangle } \left| {{{\bar i}_2}} \right\rangle \left| {{{\bar i}_3}} \right\rangle \left| {{{\bar i}_4}} \right\rangle \left| {{{\bar i}_5}} \right\rangle \\
 = {\theta _0}\left| 0 \right\rangle \left| 1 \right\rangle \left| 1 \right\rangle \left| 1 \right\rangle \left| 1 \right\rangle  + {\theta _1}\left| 0 \right\rangle \left| 1 \right\rangle \left| 1 \right\rangle \left| 1 \right\rangle \left| 0 \right\rangle \\
 +  \cdots  + {\theta _{31}}\left| 1 \right\rangle \left| 0 \right\rangle \left| 0 \right\rangle \left| 0 \right\rangle \left| 0 \right\rangle
\end{array}
\end{equation}
which implements a symmetric flip of a 5D color image. This is shown
in part (a) of Figure \ref{fig:20x1}.
\begin{figure}[!h]
\centering
\includegraphics[width=3in]{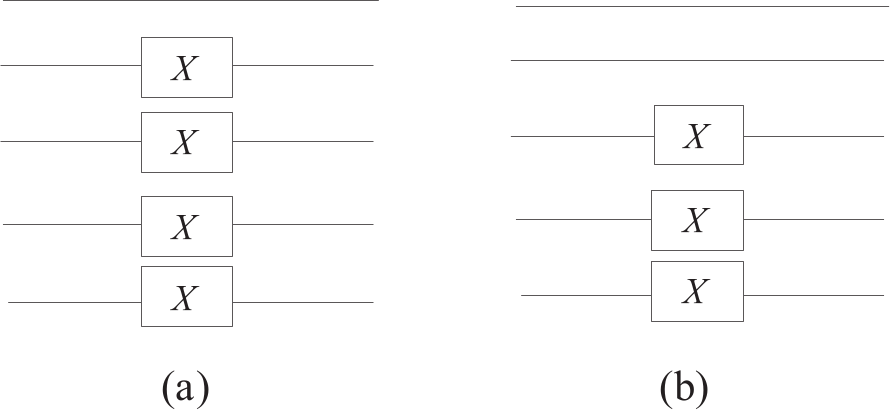}
\caption{Implementation of the operator $G_F^{\left| {{v_1}} \right\rangle }$. (a) $G_F^{\left| {{v_1}} \right\rangle }$ is in Eq. (\ref{eqn_20X3}). (b) $G_F^{\left| {{v_1}} \right\rangle }$ is in Eq. (\ref{eqn_20X4}).}
\label{fig:20x1}
\end{figure}

Substituting $n = 5$, $k = 3$, $j = 1$, ${m_1} = 2$, ${m_{\rm{2}}} = 2$, and ${m_{\rm{3}}} = {\rm{1}}$ into Eq. (\ref{eqn_20X1}), we obtain
\begin{equation}
\label{eqn_20X4}
\begin{array}{l}
G_F^{\left| {{v_1}} \right\rangle }({\left| \psi  \right\rangle _3}) = \sum\limits_{i = 0}^{{2^n} - 1} {{\theta _i}\left| {{i_1}{i_2}} \right\rangle } \left| {{{\bar i}_3}{{\bar i}_4}} \right\rangle \left| {{{\bar i}_5}} \right\rangle \\
 = {\theta _0}\left| {00} \right\rangle \left| {11} \right\rangle \left| 1 \right\rangle  + {\theta _1}\left| {00} \right\rangle \left| {11} \right\rangle \left| 0 \right\rangle  +  \cdots  + {\theta _{31}}\left| {11} \right\rangle \left| {00} \right\rangle \left| 0 \right\rangle
\end{array}
\end{equation}
which realizes a symmetric flip of a 3D color image and is shown
in part (b) of Figure \ref{fig:20x1}. Application of
the quantum circuit to the image in Figure \ref{fig:x5} is shown in Figure \ref{fig:7}.
\begin{figure}[!t]
\centering
\includegraphics[width=3in]{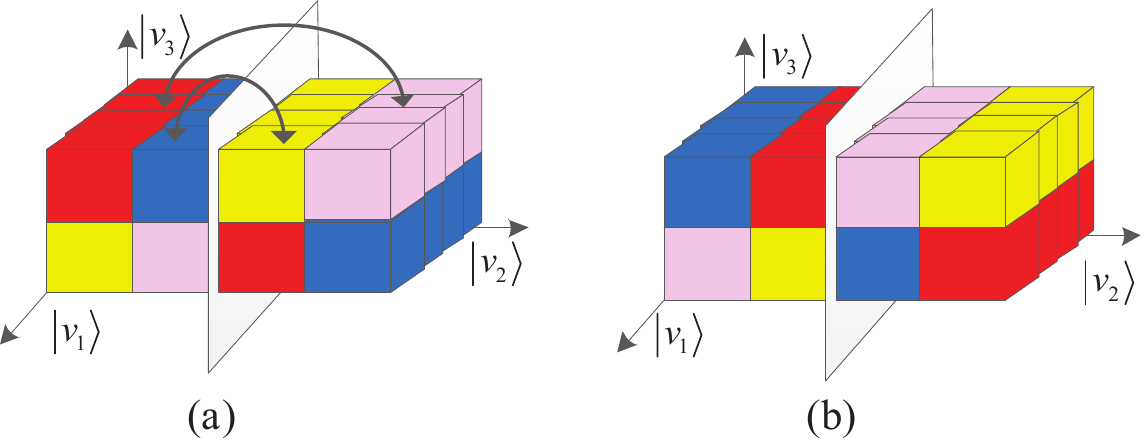}
\caption{Symmetry flip of a 3D color image along the $\left| {{v_1}} \right\rangle $ axis. (a) The original image. (b) The transformed image.}
\label{fig:7}
\end{figure}

\begin{theorem}
\label{theorem2}
The operator $G_F^{\left| {{v_j}} \right\rangle }$ can be implemented by ${{\rm{2}}^{n{\rm{ - 1}}}}$ two-point swappings in Eq. (\ref{eqn_11}). The quantum circuit is reduced to the one shown in Figure \ref{fig:8}, and
the complexity of the symmetric flip is $O(n)$.
\end{theorem}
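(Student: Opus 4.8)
The plan is to treat $G_F^{|v_j\rangle}$ as a permutation of the $2^n$ computational basis states and to show that this permutation is a fixed-point-free involution, so that it factors into exactly $2^{n-1}$ disjoint transpositions, each of which is a two-point swapping in the sense of Eq. (\ref{eqn_11}). First I would read off from Eq. (\ref{eqn_20X1}) that $G_F^{|v_j\rangle}$ sends a basis state $|i\rangle = |v_1\rangle \cdots |v_k\rangle$ to the state $|i'\rangle$ obtained by complementing every bit outside the $j$th axis while leaving $|v_j\rangle$ intact. Complementing the same bits twice returns the original state, so $G_F^{|v_j\rangle}$ squares to the identity and is therefore an involution. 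Writing $m = n - \dim(|v_j\rangle)$, at least $m \ge 1$ bits are flipped for every $i$, whence $i' \ne i$ always; the involution thus has no fixed point.

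Next I would invoke the elementary fact that a fixed-point-free involution on a set of $2^n$ elements partitions it into $2^n/2 = 2^{n-1}$ disjoint $2$-cycles. Each such $2$-cycle $(i, i')$ is precisely the two-point swapping $G_T$ of Eq. (\ref{eqn_11}) with $s = i$ and $t = i'$, and the product of all $2^{n-1}$ of these swappings reproduces $G_F^{|v_j\rangle}$ on every basis state. This establishes the first assertion, that the symmetric flip is realized by exactly $2^{n-1}$ two-point swappings.

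The main step — and the one I expect to require the most care — is the circuit reduction, since a priori the construction above concatenates $2^{n-1}$ individually costly swapping circuits, an exponential count. The key observation is that all of these transpositions flip the \emph{same} $m$ qubit positions (those lying outside axis $j$) and preserve the $m_j$ qubits of axis $j$. Consequently their product acts on every basis state in lockstep and therefore coincides, as a unitary, with the tensor-product operator of Eq. (\ref{eqn_20X2}): a Pauli-$X$ applied in parallel to each of the $m$ flipped qubits and $I$ on the axis-$j$ qubits. This uniformity is what collapses the exponentially many swappings into the single circuit of Figure \ref{fig:8}; I would present it as the heart of the argument, verifying the identity of unitaries by checking agreement on the computational basis.

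Finally, since the reduced circuit consists of $m = n - \dim(|v_j\rangle) < n$ single-qubit Pauli-$X$ gates and contains no controlled-NOT gates, Definition \ref{def_8x4} yields complexity $m = O(n)$, completing the proof. The only genuine obstacle is conceptual rather than computational — recognizing that the shared bit-flip pattern lets the product of exponentially many transpositions factor through the tensor structure of Eq. (\ref{eqn_20X2}) — while the counting of swaps and the complexity bound are immediate once the involution structure is in hand.
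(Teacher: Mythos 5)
Your proof is correct and takes essentially the same route as the paper's: decompose $G_F^{\left| {{v_j}} \right\rangle }$ into $2^{n-1}$ two-point swappings, observe that their product collapses to the parallel Pauli-$X$ circuit of Eq. (\ref{eqn_20X2}) and Figure \ref{fig:8}, and count $m = n - \dim (\left| {{v_j}} \right\rangle ) < n$ single-qubit gates for the $O(n)$ bound. If anything, your version is tighter than the paper's own argument: the paper justifies $G_F^{\left| {{v_j}} \right\rangle } = T$ merely by noting that both operators are unitary and agree on the NASS state, whereas your fixed-point-free-involution structure and the verification of the unitary identity on every computational basis state supply exactly the justification the paper leaves implicit.
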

\begin{proof}
See Appendix B.
\end{proof}

\begin{definition}
\label{def_5}
A local flip $G_{LF}^{\left| {{v_x}} \right\rangle v(j,h,m)}$ for the NASS state $\left| \psi \right\rangle_{k}$ along the $\left| {{v_x}} \right\rangle $ axis is defined as the operator
\begin{equation}
\label{eqn_r20}
\begin{array}{l}
G_{LF}^{\left| {{v_x}} \right\rangle v(j,h,m)}({\left| \psi  \right\rangle _k}) = \sum\limits_{i = 0,{j_h} \ne m}^{{2^n} - 1} {{\theta _i}} \left| {{v_1}} \right\rangle  \cdots \left| {{v_k}} \right\rangle \\
 + \sum\limits_{i = 0,{j_h} = m}^{{2^n} - 1} {({\theta _i}} \left| {\overline {{v_1}} } \right\rangle  \cdots \left| {\overline {{v_{j - 1}}} } \right\rangle \left| {{{\bar j}_1} \cdots {{\bar j}_{h - 1}}{j_h}{{\bar j}_{h + 1}} \cdots {{\bar j}_{{m_j}}}} \right\rangle \\
\left| {\overline {{v_{j + 1}}} } \right\rangle  \cdots \left| {\overline {v{}_{x - 1}} } \right\rangle \left| {v{}_x} \right\rangle \left| {\overline {v{}_{x + 1}} } \right\rangle  \cdots \left| {\overline {{v_k}} } \right\rangle )
\end{array}
\end{equation}
where ${\left| \psi \right\rangle _k}$ represents a $k$-dimensional color image (see Eq. (\ref{eqn_3})),
and $\left| {{v_1}} \right\rangle , \ldots ,\left| {{v_k}} \right\rangle $ are the $k$ axes.
\end{definition}
Here $v(j,h,m)$ means that for the axis $\left| {{v_j}} \right\rangle = \left| {{j_1} \cdots {j_h} \cdots {j_{{m_j}}}} \right\rangle $, when ${j_h} \ne m$, the corresponding pixels are not transformed. When ${j_h} = m$, the corresponding pixels are flipped. Suppose that $\left| {{v_l}} \right\rangle = \left| {{j_1}{j_2} \cdots {j_{{m_l}}}} \right\rangle $, then $\left| {\overline {{v_l}} } \right\rangle = \left| {{{\bar j}_1}{{\bar j}_2} \cdots {{\bar j}_{{m_l}}}} \right\rangle$, ${\bar j_h} = 1 - {j_h}$, $h = 1,2, \cdots {m_l}$.

For example, application of $G_{LF}^{\left| {{v_x}} \right\rangle v(j,h,m)}$ to the NASS state ${\left| \psi \right\rangle _k}$ in Eq. (\ref{eqn_316x5}) is
\begin{equation}
\label{eqn_22X1}
\begin{array}{l}
G_{LF}^{\left| {{v_x}} \right\rangle v(j,h,m)}({\left| \psi  \right\rangle _k})\\
 = \sum\limits_{i = 0,{j_h} \ne m}^{{2^n} - 1} {{\theta _i}\left| {{i_1} \cdots {i_{{m_1}}}} \right\rangle }  \cdots \left| {{i_{{l_k} + 1}} \cdots {i_{{l_k} + {m_k}}}} \right\rangle \\
 + \sum\limits_{i = 0,{j_h} = m}^{{2^n} - 1} {{\theta _i}\left| {{{\bar i}_1} \cdots {{\bar i}_{{m_1}}}} \right\rangle }  \cdots \left| {{{\bar j}_1} \cdots {{\bar j}_{h - 1}}{j_h}{{\bar j}_{h + 1}} \cdots {{\bar j}_{{m_j}}}} \right\rangle \\
 \cdots \left| {{i_{{l_x} + 1}} \cdots {i_{{l_x} + {m_x}}}} \right\rangle  \cdots \left| {{{\bar i}_{{l_k} + 1}} \cdots {{\bar i}_{{l_k} + {m_k}}}} \right\rangle
\end{array}
\end{equation}
where ${m_y} = \dim (\left| {{v_y}} \right\rangle )$, $y = 1,2, \ldots ,k$, ${l_l} = {m_1} + {m_2} + \cdots + {m_{l - 1}}$, $l = 2, \ldots ,k$.

The implementation of $G_{LF}^{\left| {{v_x}} \right\rangle v(j,h,m)}$ is shown in Figure \ref{fig:14}. From Figure \ref{fig:14}, we know that $G_{LF}^{\left| {{v_x}} \right\rangle v(j,h,m)}$ is implemented by $n - 1 - \dim (\left| {{v_x}} \right\rangle )$ ${N_{C1}}$ or the ${N_{C0}}$ gates in Figure \ref{fig:2}, i.e., the complexity of the operator $G_{LF}^{\left| {{v_x}} \right\rangle v(j,h,m)}$ is $O(n)$.

\begin{figure}[!h]
\centering
\includegraphics[width=2.5in]{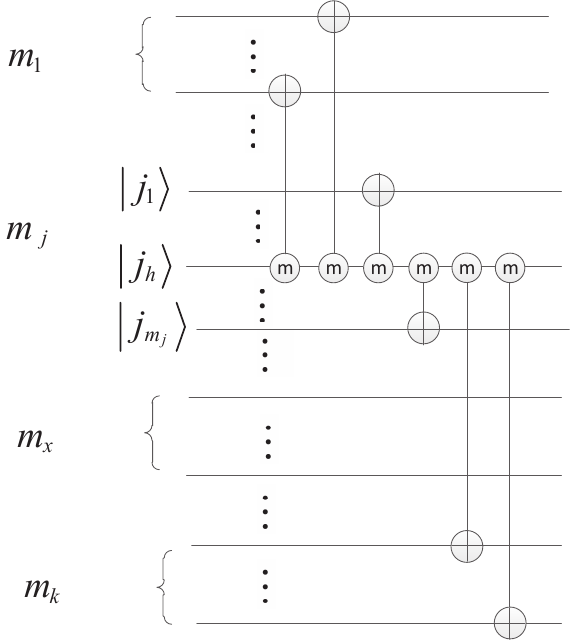}
\caption{Implementation of the operator $G_{LF}^{\left| {{v_x}} \right\rangle v(j,h,m)}$ where $m=0$ or $m=1$.}
\label{fig:14}
\end{figure}

Substituting $n = 5$, $k = 5$, $x = 1$, $v(j,h,m) = v(2,1,1)$ and ${m_{\rm{1}}} = {m_{\rm{2}}}{\rm{ = }}{m_{\rm{3}}}{\rm{ = }}{m_{\rm{4}}}{\rm{ = }}{m_{\rm{5}}}{\rm{ = 1}}$ into Eq. (\ref{eqn_22X1}), we obtain
\begin{equation}
\label{eqn_22X2}
\begin{array}{l}
G_{LF}^{\left| {{v_1}} \right\rangle v(2,1,1)}({\left| \psi  \right\rangle _5})\\
 = \sum\limits_{i = 0}^{31} {{\theta _i}\left| {{i_1}} \right\rangle } \left| 0 \right\rangle \left| {{i_3}} \right\rangle \left| {{i_4}} \right\rangle \left| {{i_5}} \right\rangle  + \sum\limits_{i = 0}^{31} {{\theta _i}\left| {{i_1}} \right\rangle } \left| 1 \right\rangle \left| {{{\bar i}_3}} \right\rangle \left| {{{\bar i}_4}} \right\rangle \left| {{{\bar i}_5}} \right\rangle
\end{array}
\end{equation}
Thus, we have implemented a local flip of a 5D color image. The implementation of the operator $G_{LF}^{\left| {{v_1}} \right\rangle v(2,1,1)}$ in Eq. (\ref{eqn_22X2}) is shown in part (a) of Figure \ref{fig:22x1}.
\begin{figure}[!h]
\centering
\includegraphics[width=3in]{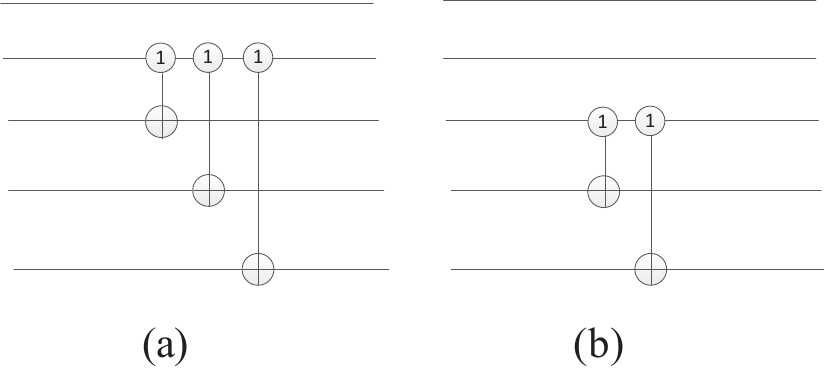}
\caption{Two implementation circuits for a local flip. (a) The implementation circuit for $G_{LF}^{\left| {{v_1}} \right\rangle v(2,1,1)}$ in Eq. (\ref{eqn_22X2}). (b) The implementation circuit for $G_{LF}^{\left| {{v_1}} \right\rangle v(2,1,1)}$ in Eq. (\ref{eqn_22X3}).}
\label{fig:22x1}
\end{figure}

Substituting $n = 5$, $k = 3$, $x = 1$, ${m_1} = 2$, ${m_{\rm{2}}} = 2$, ${m_{\rm{3}}} = {\rm{1}}$, and $v(j,h,m) = v(2,1,1)$ into Eq. (\ref{eqn_22X1}), we obtain
\begin{equation}
\label{eqn_22X3}
\begin{array}{l}
G_{LF}^{\left| {{v_1}} \right\rangle v(2,1,1)}({\left| \psi  \right\rangle _3})\\
 = \sum\limits_{i = 0}^{31} {{\theta _i}\left| {{i_1}{i_2}} \right\rangle } \left| {0{i_4}} \right\rangle \left| {{i_5}} \right\rangle  + \sum\limits_{i = 0}^{31} {{\theta _i}\left| {{i_1}{i_2}} \right\rangle } \left| {1{{\bar i}_4}} \right\rangle \left| {{{\bar i}_5}} \right\rangle
\end{array}
\end{equation}
which implements a symmetric flip of a 3D color image. See part (b) of Figure \ref{fig:22x1}
for the diagram. The application of the local flip to the image in Figure \ref{fig:x5} is shown in Figure \ref{fig:9}.

\begin{figure}[!h]
\centering
\includegraphics[width=3.5in]{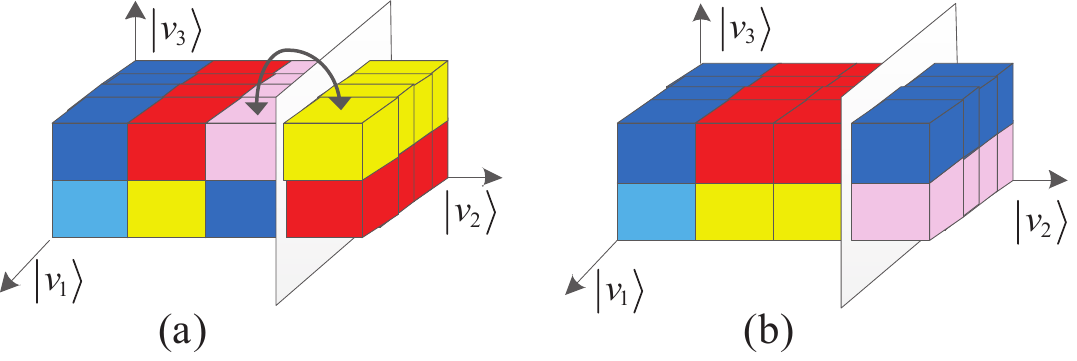}
\caption{Local flip of a 3D color image. (a) The original image. (b) The transformed image.}
\label{fig:9}
\end{figure}

\subsection{\label{sec:level63} Orthogonal rotations}
\begin{definition}
\label{def_6}
An orthogonal rotation
$R_{\left| {{v_x}} \right\rangle \otimes \left| {{v_y}} \right\rangle }^\alpha $ for the NASS state ${\left| \psi \right\rangle _k}$ along the plane spanned by $\left| {{v_x}} \right\rangle \otimes \left| {{v_y}} \right\rangle $ is defined as
\begin{equation}
\label{eqn_23}
R_{\left| {{v_x}} \right\rangle  \otimes \left| {{v_y}} \right\rangle }^\alpha ({\left| \psi  \right\rangle _k}) = \sum\limits_{i = 0}^{{2^n} - 1} {{\theta _i}\left| {{v_1}} \right\rangle  \cdots \left| {v_x}' \right\rangle  \cdots \left| {v_y}' \right\rangle  \cdots \left| {{v_k}} \right\rangle }
\end{equation}
 where ${\left| \psi \right\rangle _k}$ is a $k$-dimensional color image (see Eq. (\ref{eqn_3})), $\alpha \in \left\{ {\frac{\pi }{2},\pi ,\frac{3}{2}\pi } \right\}$, $\dim (\left| {{v_x}} \right\rangle ) = \dim (\left| {{v_y}} \right\rangle )$, and
\begin{equation}
\label{eqn_24}
\left\{ {\begin{array}{*{20}{c}}
{\left| {v_x}' \right\rangle \left| {v_y}' \right\rangle  = \left| {{v_y}} \right\rangle \left| {\overline {{v_x}} } \right\rangle }&{\alpha  = \frac{\pi }{2}}\\
{\left| {v_x}' \right\rangle \left| {v_y}' \right\rangle  = \left| {\overline {{v_x}} } \right\rangle \left| {\overline {{v_y}} } \right\rangle }&{\alpha  = \pi }\\
{\left| {v_x}' \right\rangle \left| {v_y}' \right\rangle  = \left| {\overline {{v_y}} } \right\rangle \left| {{v_x}} \right\rangle }&{\alpha  = \frac{{3\pi }}{2}}
\end{array}} \right.
\end{equation}
\end{definition}

For example, the result after applying $R_{\left| {{v_x}} \right\rangle \otimes \left| {{v_y}} \right\rangle }^\alpha $ to the NASS state ${\left| \psi \right\rangle _k}$ in Eq. (\ref{eqn_316x5}) is
\begin{equation}
\label{eqn_22X4}
\begin{array}{l}
R_{\left| {{v_x}} \right\rangle  \otimes \left| {{v_y}} \right\rangle }^\alpha ({\left| \psi  \right\rangle _k})\\
 = \sum\limits_{i = 0}^{{2^n} - 1} {{\theta _i}\left| {{i_1} \cdots {i_{{m_1}}}} \right\rangle }  \cdots \left| {{{({i_{{l_x} + 1}} \cdots {i_{{l_x} + {m_x}}})}'}} \right\rangle \\
 \cdots \left| {{{({i_{{l_y} + 1}} \cdots {i_{{l_y} + {m_y}}})}'}} \right\rangle  \cdots \left| {{i_{{l_k} + 1}} \cdots {i_{{l_k} + {m_k}}}} \right\rangle
\end{array}
\end{equation}
where ${m_j} = \dim (\left| {{v_j}} \right\rangle )$, $j = 1,2, \ldots ,k$, ${m_x} = {m_y}$, ${l_l} = {m_1} + {m_2} + \cdots + {m_{l - 1}}$, $l = 2, \ldots ,k$, and
\begin{equation}
\label{eqn_22X5}
\left\{ {\begin{array}{*{20}{c}}
\begin{array}{l}
\left| {{{({i_{{l_x} + 1}} \cdots {i_{{l_x} + {m_x}}})}'}} \right\rangle \left| {{{({i_{{l_y} + 1}} \cdots {i_{{l_y} + {m_y}}})}'}} \right\rangle \\
 = \left| {{i_{{l_y} + 1}} \cdots {i_{{l_y} + {m_y}}}} \right\rangle \left| {{{\bar i}_{{l_x} + 1}} \cdots {{\bar i}_{{l_x} + {m_x}}}} \right\rangle
\end{array}&{\alpha  = \frac{\pi }{2}}\\
\begin{array}{l}
\left| {{{({i_{{l_x} + 1}} \cdots {i_{{l_x} + {m_x}}})}'}} \right\rangle \left| {{{({i_{{l_y} + 1}} \cdots {i_{{l_y} + {m_y}}})}'}} \right\rangle \\
 = \left| {{{\bar i}_{{l_x} + 1}} \cdots {{\bar i}_{{l_x} + {m_x}}}} \right\rangle \left| {{{\bar i}_{{l_y} + 1}} \cdots {{\bar i}_{{l_y} + {m_y}}}} \right\rangle
\end{array}&{\alpha  = \pi }\\
\begin{array}{l}
\left| {{{({i_{{l_x} + 1}} \cdots {i_{{l_x} + {m_x}}})}'}} \right\rangle \left| {{{({i_{{l_y} + 1}} \cdots {i_{{l_y} + {m_y}}})}'}} \right\rangle \\
 = \left| {{{\bar i}_{{l_y} + 1}} \cdots {{\bar i}_{{l_y} + {m_y}}}} \right\rangle \left| {{i_{{l_x} + 1}} \cdots {i_{{l_x} + {m_x}}}} \right\rangle
\end{array}&{\alpha  = \frac{{3\pi }}{2}}
\end{array}} \right.
\end{equation}

A swap gate is illustrated in Figure \ref{fig:10}. The implementation circuits for $R_{\left| {{v_i}} \right\rangle \otimes \left| {{v_j}} \right\rangle }^\alpha $ are shown in Figure \ref{fig:11}.
\begin{figure}[!h]
\centering
\includegraphics[width=2.5in]{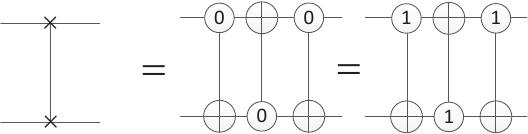}
\caption{Swap gate. The swap gate (left) can be built using three ${N_{C0}}$ gates (middle) or three ${N_{C1}}$gates(right), where ${N_{C0}}$ and ${N_{C1}}$ are shown in Figure \ref{fig:2}. }
\label{fig:10}
\end{figure}
\begin{figure}[!h]
\centering
\includegraphics[width=3in]{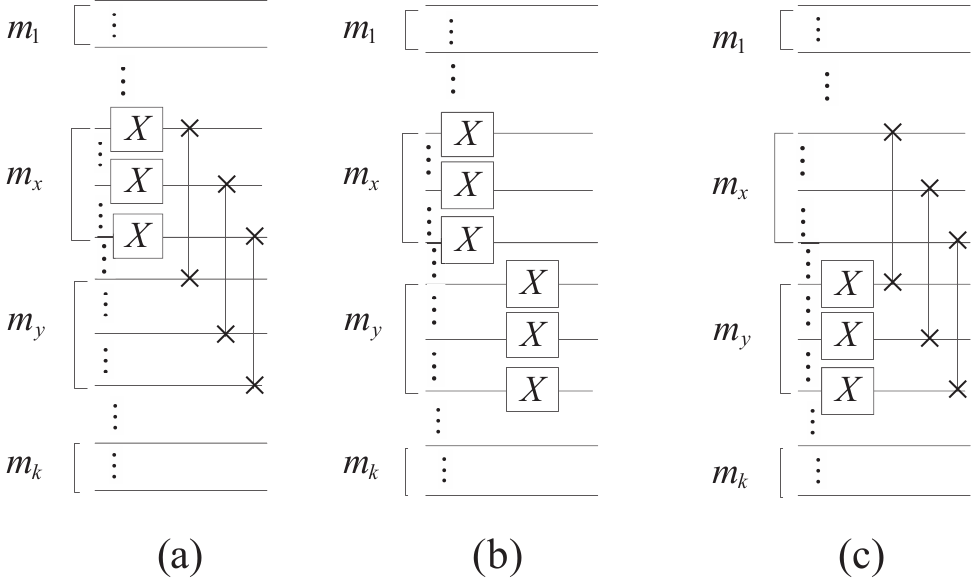}
\caption{Implementation circuits for orthogonal rotation operators $R_{\left| {{v_x}} \right\rangle \otimes \left| {{v_y}} \right\rangle }^\alpha $. (a) $\alpha = \frac{\pi }{2}$. (b) $\alpha = \frac{\pi }{2}$. (c) $\alpha = \pi $ and $\alpha = \frac{{3\pi }}{2}$.}
\label{fig:11}
\end{figure}

It follows from Figure \ref{fig:11} that $R_{\left| {{v_x}} \right\rangle \otimes \left| {{v_y}} \right\rangle }^\alpha $ can be built from ${m_x}$ swap gates shown in Figure \ref{fig:10} and ${m_x}$ $X$ gates shown in Figure \ref{fig:1} for $\alpha = \frac{\pi }{2}$ or $\alpha = \frac{{3\pi }}{2}$. When $\alpha = \pi$, $R_{\left| {{v_x}} \right\rangle \otimes \left| {{v_y}} \right\rangle }^\alpha $ can be built by $2{m_x}$ $X$ gates. Because ${m_x} = \dim (\left| {{v_x}} \right\rangle ) < n$, the complexity of the orthogonal rotation $R_{\left| {{v_x}} \right\rangle \otimes \left| {{v_y}} \right\rangle }^\alpha $ is $O(n)$.

\subsection{\label{sec:level64} Translations}
\begin{definition}
\label{def_7}
A translation ${T_{\left| {{v_x}} \right\rangle }}$ for the NASS state $\left| \psi \right\rangle_{k} $ along the $\left| {{v_x}} \right\rangle $ axis is defined as the operator
\begin{equation}
\label{eqn_22X7}
{T_{\left| {{v_x}} \right\rangle }}({\left| \psi  \right\rangle _k}) = \sum\limits_{i = 0}^{{2^n} - 1} {{\theta _i}} \left| {{v_1}} \right\rangle  \cdots \left| {{v_{x - 1}}} \right\rangle \left| {v_x}' \right\rangle \left| {{v_{x + 1}}} \right\rangle  \cdots \left| {{v_k}} \right\rangle
\end{equation}
where ${\left| \psi \right\rangle _k}$ represents a $k$-dimensional color image (see Eq. (\ref{eqn_3})), $\dim (\left| {{v_x}} \right\rangle ) = {m_x}$. Suppose that $\left| {{v_x}} \right\rangle = \left| {{x_1}{x_2} \cdots {x_{{m_x}}}} \right\rangle = \left| j \right\rangle $, where ${j = {x_1}{x_2} \cdots {x_{{m_x}}}}$ is the binary expansion of the integer $j$, then
\begin{equation}
\label{eqn_22X8}
\left\{ \begin{array}{l}
\begin{array}{*{20}{c}}
{\left| {v_x}' \right\rangle  = \left| {j + 1} \right\rangle }&{0 \le j = {x_1}{x_2} \cdots {x_{{m_x}}} \le {2^{{m_x}}} - 2}
\end{array}\\
\begin{array}{*{20}{c}}
{\left| {v_x}' \right\rangle  = \left| 0 \right\rangle }&{{\rm{      }}j = {2^{{m_x}}} - 1}
\end{array}
\end{array} \right.
\end{equation}
\end{definition}

For example, when $\dim (\left| {{v_i}} \right\rangle ) = {m_i}$, $i = 1,2, \ldots ,k$, the translation operator ${T_{\left| {{v_x}} \right\rangle }}$ of a $k$-dimensional color image is also expressed as
\begin{equation}
\label{eqn_28}
\begin{array}{l}
{T_{\left| {{v_x}} \right\rangle }} = {I^{ \otimes {m_1}}} \otimes  \cdots {I^{ \otimes {m_{x - 1}}}}\\
 \otimes \left( {\left( {\sum\limits_{j = 0}^{{2^{{m_x}}} - 2} {\left| {j + 1} \right\rangle \left\langle j \right|} } \right) + \left| 0 \right\rangle \left\langle {{2^m} - 1} \right|} \right)\\
 \otimes {I^{ \otimes {m_{x + 1}}}} \cdots  \otimes {I^{ \otimes {m_k}}}
\end{array}
\end{equation}

The inside factor is called the key of implementing the translation ${T_{\left| {{v_x}} \right\rangle }}$ in Eq. (\ref{eqn_28}):
 \begin{equation}
\label{eqn_22X10}
{T_{key}} = \left( {\sum\limits_{j = 0}^{{2^{{m_x}}} - 2} {\left| {j + 1} \right\rangle \left\langle j \right|} } \right) + \left| 0 \right\rangle \left\langle {{2^m} - 1} \right|
\end{equation}
and the operator ${T_{key}}$ in Eq. (\ref{eqn_22X10}) can be implemented by two-point swappings successively as follows.
\begin{equation}
\label{eqn_22X11}
\begin{array}{l}
\left| {{2^{{m_x}}} - 1} \right\rangle  \leftrightarrow \left| 0 \right\rangle \\
\left| {{2^{{m_x}}} - 2} \right\rangle  \leftrightarrow \left| {{2^{{m_x}}} - 1} \right\rangle \\
\left| {{2^{{m_x}}} - 3} \right\rangle  \leftrightarrow \left| {{2^{{m_x}}} - 2} \right\rangle \\
 \vdots \\
\left| 1 \right\rangle  \leftrightarrow \left| 2 \right\rangle
\end{array}
\end{equation}

\begin{theorem}
\label{theorem4}
The operator ${T_{key}}$ in Eq. (\ref{eqn_22X10}) can be implemented by ${{\rm{2}}^{\rm{m}}}{\rm{ - 1}}$ two-point swapping operators. The complexity of the operator ${T_{key}}$ is $O({2^m}{m^2})$, and the complexity of the translation ${T_{\left| {{v_x}} \right\rangle }}$ is also $O({2^m}{m^2})$,
where $\dim (\left| {{v_x}} \right\rangle ) = m$.
\end{theorem}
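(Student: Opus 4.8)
The plan is to first recognize the operator $T_{key}$ in Eq. (\ref{eqn_22X10}) as the cyclic increment permutation on the $m$-qubit register $\left| {{v_x}} \right\rangle$, i.e. the map $\left| j \right\rangle \mapsto \left| {(j+1) \bmod 2^m} \right\rangle$, and then to verify that the ordered product of the two-point swappings listed in Eq. (\ref{eqn_22X11}) realizes exactly this permutation. Writing $N = 2^m$, I denote by $S_1$ the swapping $\left| {N-1} \right\rangle \leftrightarrow \left| 0 \right\rangle$ and, for $2 \le r \le N-1$, by $S_r$ the adjacent swapping $\left| {N-r} \right\rangle \leftrightarrow \left| {N-r+1} \right\rangle$. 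These are applied in the listed order $S_1, S_2, \ldots, S_{N-1}$, which is already $N-1 = 2^m - 1$ operators, and the task is to show their composite coincides with $T_{key}$ by tracking the orbit of each computational basis vector.

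I would split the verification into three disjoint cases. First, $\left| 0 \right\rangle$ is carried by $S_1$ to $\left| {N-1} \right\rangle$, after which each successive adjacent swap $S_2, S_3, \ldots, S_{N-1}$ decrements its position by one (at step $r$ the element sits at $N-r+1$, which is moved to $N-r$), so that it lands on $\left| 1 \right\rangle$; hence $\left| 0 \right\rangle \mapsto \left| 1 \right\rangle$. Second, $\left| {N-1} \right\rangle$ is sent by $S_1$ to $\left| 0 \right\rangle$, and since no later swap involves position $0$ it remains there, giving $\left| {N-1} \right\rangle \mapsto \left| 0 \right\rangle$. Third, for $1 \le j \le N-2$ the vector $\left| j \right\rangle$ is untouched by $S_1$ and by every $S_r$ with $r < N-j$ (these act only on positions $\ge j+1$); the first swap to move it is $S_{N-j}$, which exchanges $\left| j \right\rangle$ and $\left| {j+1} \right\rangle$, and all subsequent swaps $S_{N-j+1}, \ldots, S_{N-1}$ act only on positions $\le j$ and so leave it fixed at $\left| {j+1} \right\rangle$. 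Assembling the three cases shows the composite equals $T_{key}$ and confirms the count $2^m - 1$.

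For the complexity I would invoke Theorem \ref{theorem1}, taking care to observe that each swapping in Eq. (\ref{eqn_22X11}) acts on the $m$-qubit register $\left| {{v_x}} \right\rangle$ alone, the operator being the identity on the remaining registers; thus Theorem \ref{theorem1} applies with $n$ replaced by $m$ and bounds each individual two-point swapping by $O(m^2)$ single-qubit and controlled-NOT gates. Summing over the $2^m - 1$ swappings gives $(2^m - 1)\cdot O(m^2) = O(2^m m^2)$ for $T_{key}$. Since Eq. (\ref{eqn_28}) realizes $T_{\left| {{v_x}} \right\rangle}$ as $T_{key}$ tensored with identity operators on the other axes, and identity gates contribute nothing to the count of Definition \ref{def_8x4}, the complexity of $T_{\left| {{v_x}} \right\rangle}$ equals that of $T_{key}$, namely $O(2^m m^2)$.

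The main obstacle is the bookkeeping in the permutation argument: one must confirm both that the cascade $S_2, \ldots, S_{N-1}$ transports the element deposited at position $N-1$ all the way down to position $1$ without collision, and that each intermediate $\left| j \right\rangle$ is displaced exactly once and in the correct direction. The clean safeguard is that the active positions of $S_1, S_2, \ldots, S_{N-1}$ are strictly decreasing, so once an element reaches its target it lies strictly below every position touched by the remaining swaps. Everything else—the identification of $T_{key}$ with the increment map and the final gate count—is routine once Theorem \ref{theorem1} is available.
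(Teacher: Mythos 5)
Your proposal is correct and follows essentially the same route as the paper's Appendix C: the same decomposition of $T_{key}$ into the same ordered sequence of $2^m-1$ swappings $\left| {2^m - 1} \right\rangle \leftrightarrow \left| 0 \right\rangle$, $\left| {2^m - 2} \right\rangle \leftrightarrow \left| {2^m - 1} \right\rangle, \ldots, \left| 1 \right\rangle \leftrightarrow \left| 2 \right\rangle$, followed by the same appeal to Theorem \ref{theorem1} (with $n$ replaced by $m$) and the observation that the identity factors in Eq. (\ref{eqn_28}) cost nothing. The only difference is that you verify the permutation identity by explicit orbit tracking, whereas the paper simply asserts the corresponding equality of states; your version is the more rigorous rendering of the same argument.
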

\begin{proof}
See Appendix C.
\end{proof}

The quantum circuit for the translation $T_{\left| {{v_x}} \right\rangle }^{}$ is shown in Figure \ref{fig:t12}.

 \begin{figure}[!h]
\centering
\includegraphics[width=2in]{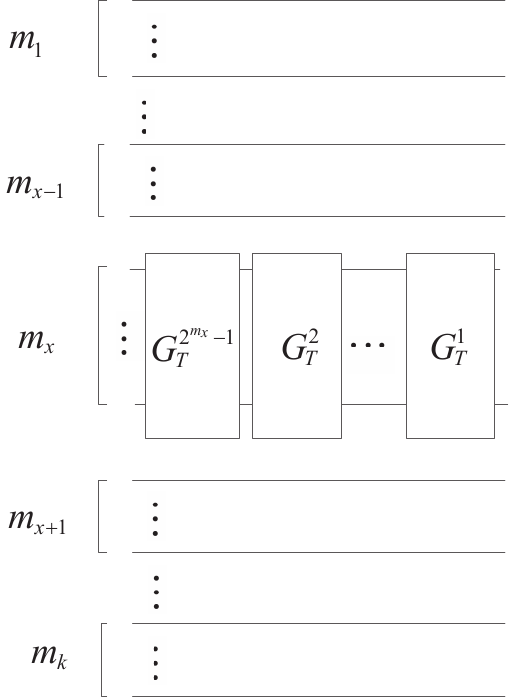}
\caption{Implementation circuit for the translation ${T_{\left| {{v_x}} \right\rangle }}$, where $G_T^{{2^{{m_x}}} - 1},G_T^{{2^{{m_x}}} - {\rm{2}}}, \cdots G_T^{\rm{1}}$ are the circuits
to implement two-point swapping successively in Eq. (\ref{eqn_22X11}) using Gray codes.}
\label{fig:t12}
\end{figure}

Substituting $n = 5$, $k = 5$, $x = 4$ and ${m_{\rm{1}}} = {m_{\rm{2}}}{\rm{ = }}{m_{\rm{3}}}{\rm{ = }}{m_{\rm{4}}}{\rm{ = }}{m_{\rm{5}}}{\rm{ = 1}}$ into Eq. (\ref{eqn_28}), we get the translation
 ${T_{\left| {{v_4}} \right\rangle }}$ for a 5D color image along $\left| {{v_4}} \right\rangle $ axis
\begin{equation}
\label{eqn_22X14}
{T_{\left| {{v_4}} \right\rangle }} = I \otimes I \otimes I \otimes \left( {\left| 1 \right\rangle \left\langle 0 \right| + \left| 0 \right\rangle \left\langle 1 \right|} \right) \otimes I
\end{equation}

After applying the operator ${T_{\left| {{v_4}} \right\rangle }}$ in Eq. (\ref{eqn_22X14}) to the image in Figure \ref{fig:x6}, the quantum circuit and the result obtained are shown in Figure \ref{fig:20}.

\begin{figure}[!h]
\centering
\includegraphics[width=3.5in]{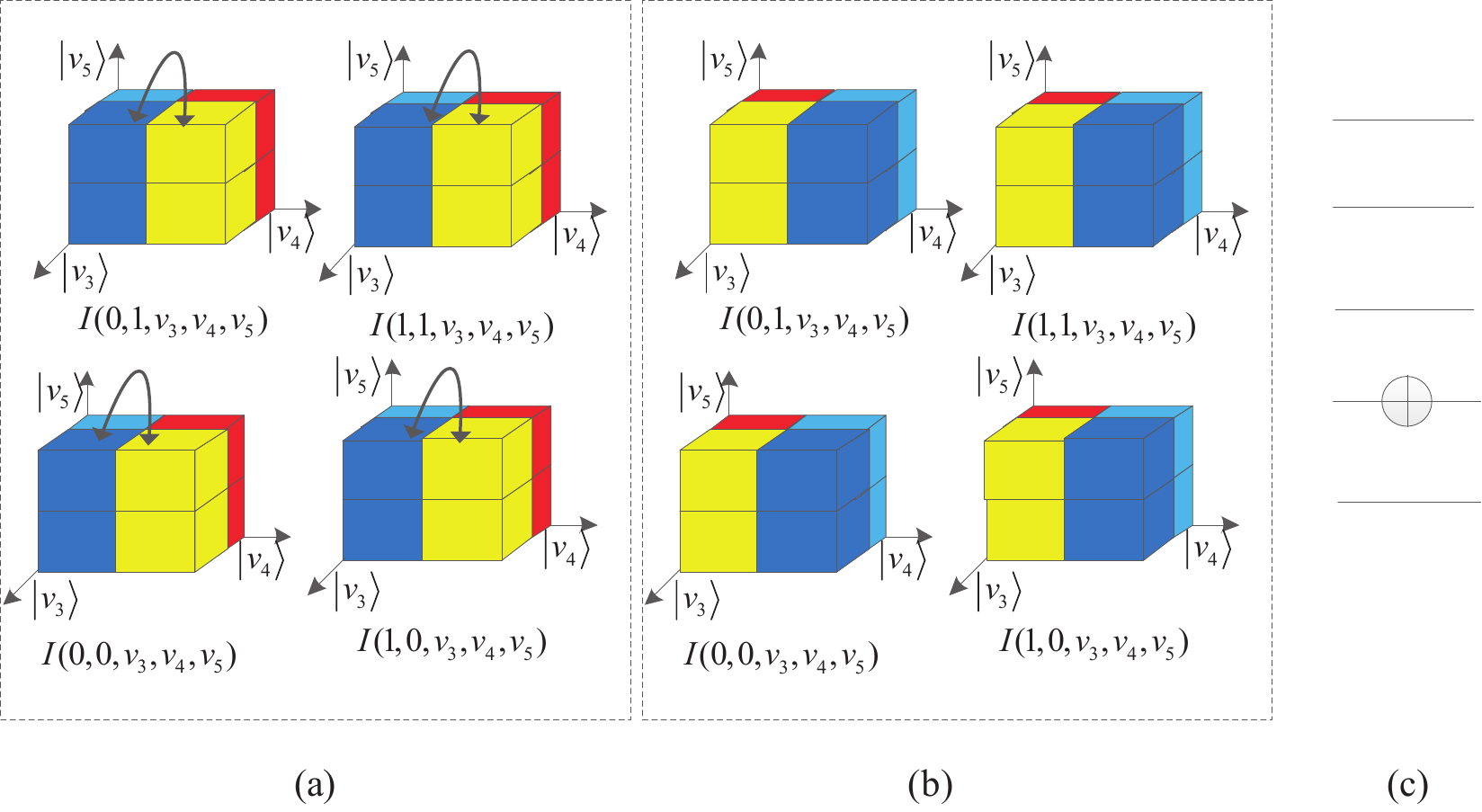}
\caption{Translation of a 5D color image along $\left| {{v_4}} \right\rangle $ axis. (a) The original image. (b) The transformed image. (c) The implementation circuit. }
\label{fig:20}
\end{figure}

Substituting $n = 5$, $k = 3$, $x = 2$, ${m_1} = 2$, ${m_{\rm{2}}} = 2$, and ${m_{\rm{3}}} = {\rm{1}}$ into Eq. (\ref{eqn_28}), we obtain
\begin{equation}
\label{eqn_22X15}
{T_{\left| {{v_2}} \right\rangle }} = {I^{ \otimes 2}} \otimes \left( {\sum\limits_{j = 0}^2 {\left| {j + 1} \right\rangle \left\langle j \right|}  + \left| 0 \right\rangle \left\langle 3 \right|} \right) \otimes I
\end{equation}
which implements a translation of a 3D color image and is shown in Figure \ref{fig:13}.
The circuits in the three dashed boxes, $G_T^{\rm{3}}$, $G_T^{\rm{2}}$, and $G_T^{\rm{1}}$, implement successively 
two-point swappings: $\left| {\rm{3}} \right\rangle \leftrightarrow \left| {\rm{0}} \right\rangle $, $\left| {\rm{2}} \right\rangle \leftrightarrow \left| {\rm{3}} \right\rangle $ and $\left| {\rm{1}} \right\rangle \leftrightarrow \left| {\rm{2}} \right\rangle $. The application of the quantum circuit in part (a) of Figure \ref{fig:13} to the image in Figure \ref{fig:x5} is shown in Figure \ref{fig:12}.

\begin{figure}[!h]
\centering
\includegraphics[width=3in]{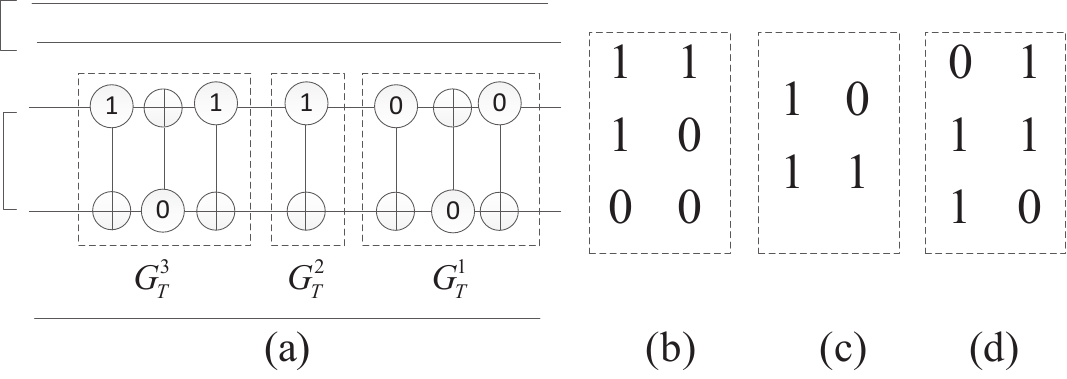}
\caption{Implementation circuit for a 3D color image. (a) The quantum circuit. (b) Gray code for $G_T^{\rm{3}}$. (c) Gray code for $G_T^{\rm{2}}$. (d) Gray code for $G_T^{\rm{1}}$.}
\label{fig:13}
\end{figure}

\begin{figure}[!h]
\centering
\includegraphics[width=3in]{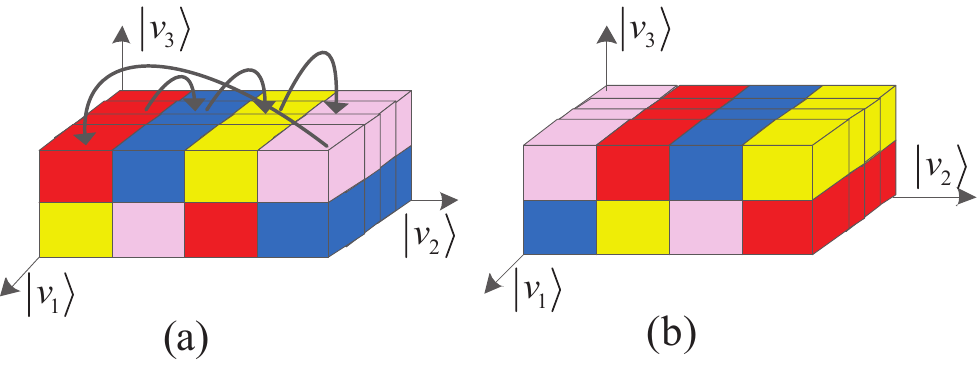}
\caption{Translation for a 3D color image along  $\left| {{v_2}} \right\rangle $ axis. (a) The original image. (b) The transformed image.}
\label{fig:12}
\end{figure}

\subsection{\label{sec:level65} Comparison of geometric transformations}

Realization of images by geometric transformations over
a quantum system is a relatively new research topic. To the best of our knowledge, 
geometric transformations based on FRQI (GTBFRQI) \cite{bibitem8,bibitem19} seems to be the only available method.
Compared with GTBFRQI, our method is capable of handling color and higher dimensional images,
while GTBFRQI is only suitable for 2D gray images.
Table \ref{table_1} and Table \ref{table_2} show the comparison of the two protocols, where $N$ is the number of  pixels in the image. Table \ref{table_1} shows the result of our method
to multidimensional color and gray images.
To store an image with $N$ pixels, our proposed method requires $\log N$ qubits, whereas  GTBFRQI needs $\log N+1$  qubits.  Table \ref{table_2} also highlights that
our method includes new transformation functions such as local flips and translations, which were absent in GTBFRQI.

On classical computers, global operators of geometric transformations for an $\sqrt N  \times \sqrt N $ image
are done by $\sqrt N  \times \sqrt N $-matrices, so the complexity of implementation
is at least $O\left( N \right)$ \cite{bibitema54,bibitem17}.
In our proposed quantum system, the global operators (symmetric flips, local flips,
and orthogonal rotations) in Table \ref{table_2}
can be implemented by $O\left( {\log N} \right)$ gates using quantum parallel computing.

\begin{table}[!h]
\scriptsize
\renewcommand{\arraystretch}{1.3}
\caption{Image types for geometric transformations.}
\label{table_1}
\centering
\begin{tabular}{c|c|c}

\hline
 &	GTBNASS &	GTBFRQI\\
 \hline
 Image types & Multidimensional color and gray image &	2D gray image\\
  \hline
 Qubit number of image storage & $\log N$ &	$\log N + 1$\\
\hline
\end{tabular}
\end{table}

\begin{table}[!h]
\scriptsize
\renewcommand{\arraystretch}{1.3}
\caption{Complexity of geometric transformations.}
\label{table_2}
\centering
\begin{tabular}{c|c|c}

\hline
   &	GTBNASS &	GTBFRQI\\
 \hline
 Two-point swapping &  $O\left( {2\log N} \right)$ &	$O\left( {2\log N + 2} \right)$\\
 \hline
 Symmetric flip & $O\left( {\log N} \right)$ &	$O\left( {\log N + 1} \right)$\\
  \hline
Local flip & $O\left( {\log N} \right)$ &	$ - $\\
  \hline
Orthogonal rotation & $O\left( {\log N} \right)$ &	$O\left( {\log N + 1} \right)$\\
   \hline
Translation & $ < O\left( {N\log N} \right)$ &	$ - $\\
\hline
\end{tabular}
\end{table}

\section{\label{sec:level13} Experiments of geometric transformations }
In the absence of the quantum computer to implement proposed geometric transformations, experiments of quantum images are simulated on a classical computer. The quantum images are stored in column vectors and geometric transformations are implemented using unitary matrices in Matlab.

A $128 \times 128$ image is regarded as the input image shown in Figure \ref{fig:24}. Results of symmetric flips
are showed in (a) and (b) of Figure \ref{fig:25}. The operators of (a) and (b) of Figure \ref{fig:25} are $G_F^{\left| X \right\rangle }{\rm{ = }}{X^{ \otimes 7}} \otimes {I^{ \otimes 7}}$ and $G_F^{\left| Y \right\rangle }{\rm{ = }}{I^{ \otimes 7}} \otimes {X^{ \otimes 7}}$.

\begin{figure}[!h]
\centering
\includegraphics[width=1.5in]{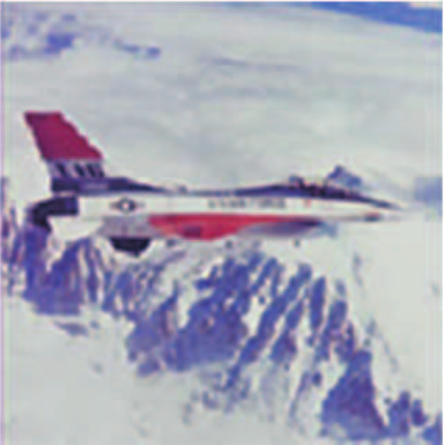}
\caption{Original image}
\label{fig:24}
\end{figure}

\begin{figure}[!h]
\centering
\includegraphics[width=3in]{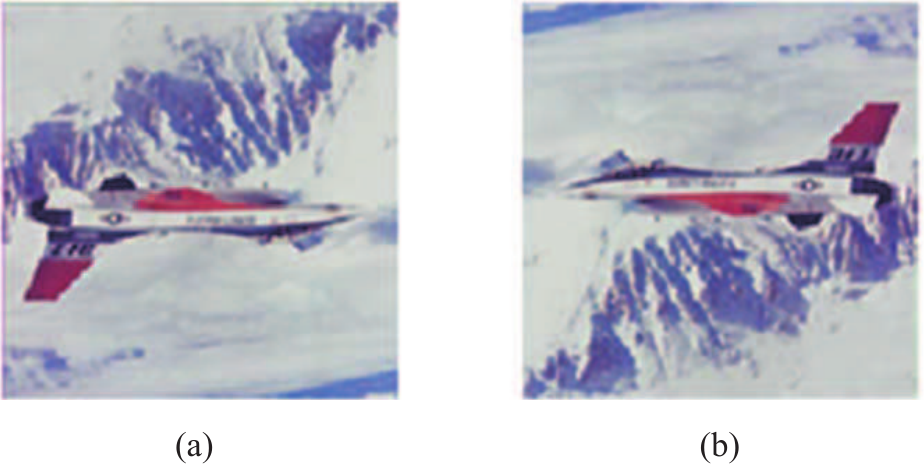}
\caption{Experiments of symmetric flips. (a) Flip along X axis. (b) Flip along $Y$ axis.}
\label{fig:25}
\end{figure}

$G_{LF}^{\left| {{v_1}} \right\rangle v(2,1,1)}$ and $G_{LF}^{\left| {{v_1}} \right\rangle v(2,1,0)}$ are the operators to flip the lower part and the upper part of the original image shown in Figure \ref{fig:24}, respectively. Applications of  $G_{LF}^{\left| {{v_1}} \right\rangle v(2,1,1)}$ and $G_{LF}^{\left| {{v_1}} \right\rangle v(2,1,0)}$ to the original image of Figure \ref{fig:24} are showed in (a) and (b) of Figure \ref{fig:26}.

\begin{figure}[!h]
\centering
\includegraphics[width=3in]{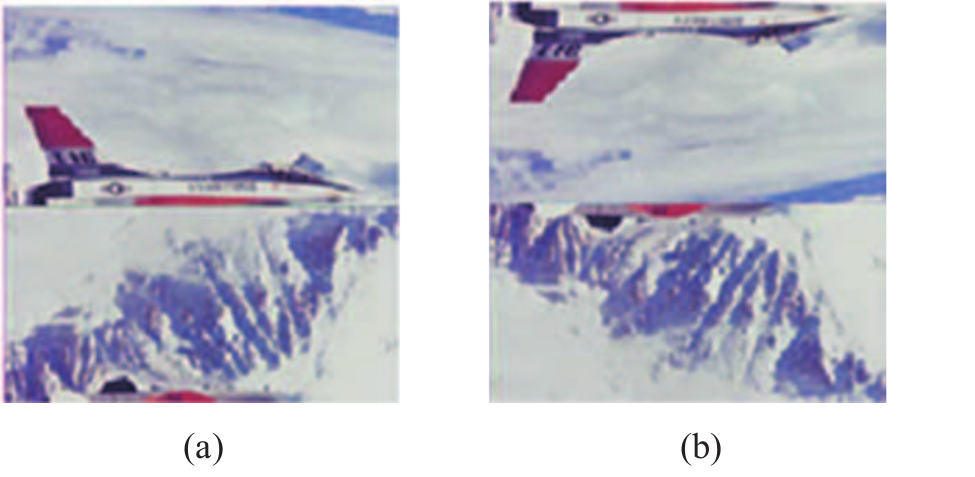}
\caption{Experiments of local flips. (a) Local flip for the lower part. (b)  Local flip for the upper part.}
\label{fig:26}
\end{figure}

Application of the orthogonal rotations $R_{\left| {{v_x}} \right\rangle  \otimes \left| {{v_y}} \right\rangle }^{{\pi  \mathord{\left/
 {\vphantom {\pi  2}} \right.
 \kern-\nulldelimiterspace} 2}}$, $R_{\left| {{v_x}} \right\rangle  \otimes \left| {{v_y}} \right\rangle }^\pi $  and  $R_{\left| {{v_x}} \right\rangle  \otimes \left| {{v_y}} \right\rangle }^{{{3\pi } \mathord{\left/
 {\vphantom {{3\pi } 2}} \right.
 \kern-\nulldelimiterspace} 2}}$ on the image of Figure \ref{fig:24} are showed in Figure \ref{fig:29}.

\begin{figure}[!h]
\centering
\includegraphics[width=5.5in]{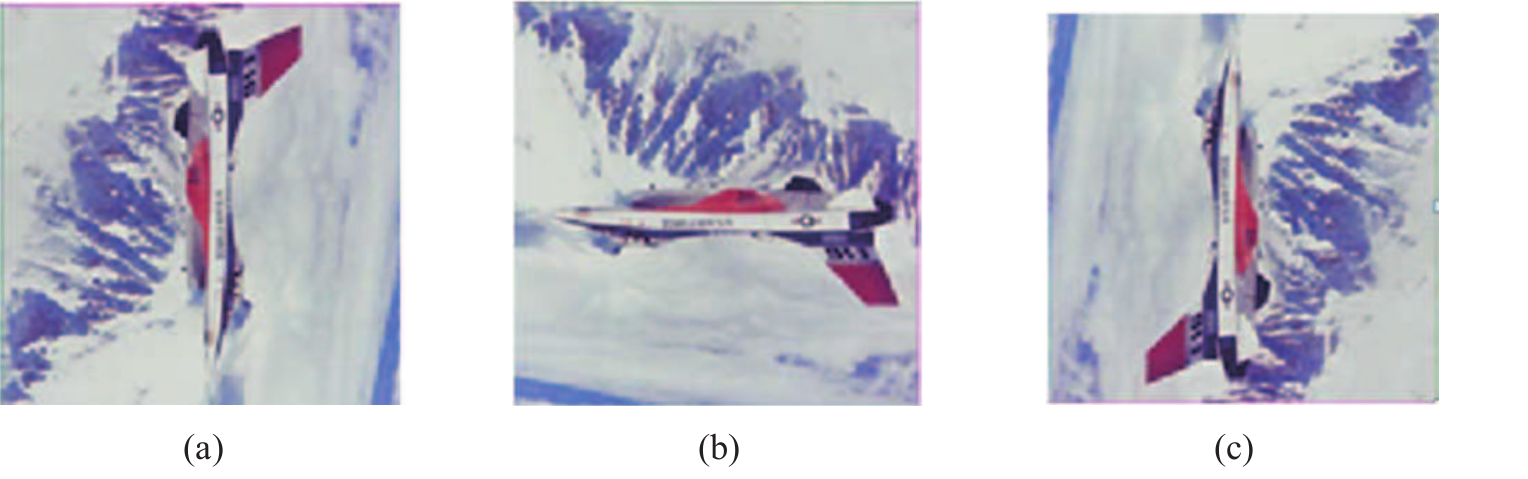}
\caption{Experiments of orthogonal rotations. (a) Rotation by $\frac{\pi }{{\rm{2}}}$. (b) Rotation by $\pi $. (c) Rotation by $\frac{{3\pi }}{2}$. }
\label{fig:29}
\end{figure}

The translation operator ${T_{\left| {{v_y}} \right\rangle }} = \left( {\left( {\sum\limits_{j = 0}^{{2^7} - 2} {\left| {j + 1} \right\rangle \left\langle j \right|} } \right) + \left| 0 \right\rangle \left\langle {{2^7} - 1} \right|} \right) \otimes {I^{ \otimes 7}}$  is applied repeatedly 10 times and 30 times on the image of Figure \ref{fig:24}. Experimental data are showed in Figure \ref{fig:31}.

\begin{figure}[!h]
\centering
\includegraphics[width=3in]{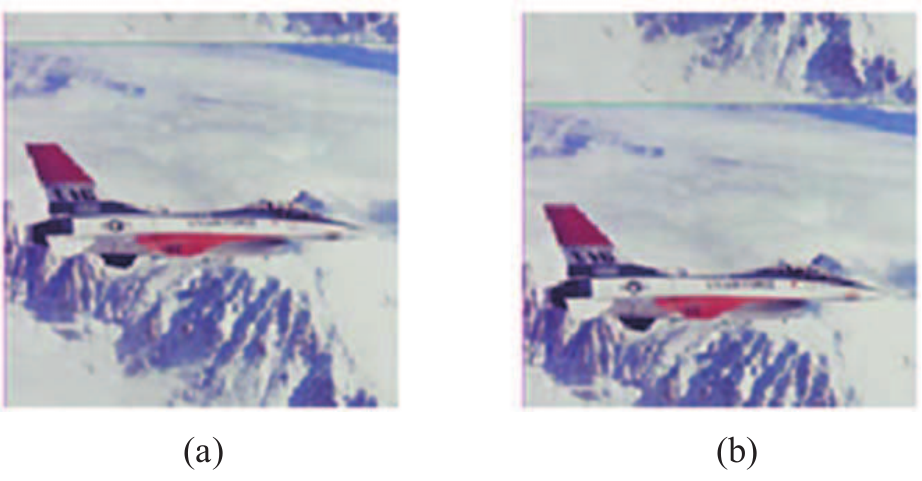}
\caption{Experiments of translations. (a) Translation of 10 pixels along $Y$ axis. (b)  Translation of 30 pixels along $Y$ axis.}
\label{fig:31}
\end{figure}

\section{\label{sec:level14} Conclusions} 
In this study, we have shown how an n-qubit NASS state represents a k-dimensional color image of $N = {2^n}$ pixels,
which demonstrates the vast capacity of NASS to store information. For a multidimensional color image based on NASS, the quantum circuits described in this study provide a convenient and efficient method for implementing geometric transformations of two-point swappings, symmetric flips, local flips, orthogonal rotations, and translations.
The complexity analysis of the geometric transformations has shown that the global operators (symmetric flips, local flips, orthogonal rotations) can be implemented by $O(n)$ gates, and the local operators (two point swappings) by $O({n^2})$ gates. This proves that the global operators are faster than the local ones, while it is
the opposite 
in classical image processing. The translation of a k-dimensional color image is achieved without additional storage space due to quantum parallel computing, which is
difficult to achieve on a classical computer.
The complexity of all geometric transformations (both global and local)
are lower than $O(N\log N)$, where $N = {2^n}$ is the number of pixels, which implies that
a geometric transformation can be constructed with a polynomial number of single-qubit and two-qubit gates. The results of simulated experiments further confirm the validity of proposed geometric transformations.
The proposed quantum scheme is one step towards more applications on quantum images with relative low complexity.
It is hoped that similar common transformations such as affine and non-rigid transformations can be implemented.
\appendix

\section{Proof of Theorem \ref{theorem1}}
\begin{proof}
\label{proof1}
Note that ${{\rm{g}}_i}$ and ${{\rm{g}}_{i + 1}}$ ($1 \le i \le m - 1$) differ by exactly one bit, say
at the $j$th bit, i.e.,
\begin{equation}
\label{eqn_19x1}
\left\{ \begin{array}{l}
{g_i} = {i_1} \cdots {i_j} \cdots {i_n}\\
{g_{i + 1}} = {i_1} \cdots {\overline i _j} \cdots {i_n}
\end{array} \right.
\end{equation}
 where ${\overline i _j} = 1 - {i_j}$, $1 \le j \le n$.

Note that the quantum circuit $X$ sends $|i_j\rangle$ to $|\bar{i_j}\rangle$, therefore
the ${C^n}({X_j})$ gate in Figure \ref{fig:x2} implements that
 \begin{align}
\label{eqn_19x2}
{C^n}({X_j})\left| {{g_i}} \right\rangle  &= \left| {{g_{i + 1}}} \right\rangle\\ \label{eqn_19x3}
{C^n}({X_j})\left| {{g_{i + 1}}} \right\rangle &= \left| {{g_i}} \right\rangle
\end{align}
and for $\left| x \right\rangle \ne \left| {{g_i}} \right\rangle, \left| {{g_{i + 1}}} \right\rangle$,
\begin{equation}
\label{eqn_19x5}
{C^n}({X_j})\left| x \right\rangle  = \left| x \right\rangle
\end{equation}
Denote this ${C^n}({X_j})$ by $C_i$, i.e. $C_i|g_i\rangle=|g_{i+1}\rangle$,
$C_i|g_{i+1}\rangle=|g_{i}\rangle$, and $C_i$ fixes other $|g_j\rangle's$.

Let ${C_T} = {C_1}{C_2}\cdots {C_{m - 2}}{C_{m - 1}}{C_{m - 2}}\cdots {C_1}$. From (\ref{eqn_19x2}), (\ref{eqn_19x3}), and (\ref{eqn_19x5}) it follows that ${C_T}$ is built by $2m - 3$ gates ${C^n}({X_y})$ ($1 \le y \le n$), which implements the transformations $\left| {{g_1}} \right\rangle \to \left| {{g_2}} \right\rangle \to \cdots\to \left| {{g_m}} \right\rangle $ and $\left| {{g_{m - 1}}} \right\rangle \to \left| {{g_{m - 2}}} \right\rangle \to \cdots \to\left| {{g_1}} \right\rangle $, and we have
\begin{equation}
\label{eqn_19x6}
\left\{ \begin{array}{l}
{C_T}\left| s \right\rangle  = {C_T}\left| {{g_1}} \right\rangle  = \left| {{g_m}} \right\rangle  = \left| t \right\rangle \\
{C_T}\left| t \right\rangle  = {C_T}\left| {{g_m}} \right\rangle  = \left| {{g_1}} \right\rangle  = \left| s \right\rangle \\
{C_T}\left| x \right\rangle  = \left| x \right\rangle
\end{array} \right.
\end{equation}
where $\left| x \right\rangle \ne \left| s \right\rangle $ and $\left| x \right\rangle \ne \left| t \right\rangle $. Thus, by applying ${C_T}$ to the NASS state $\left| \psi \right\rangle_{k}$ in Eq. (\ref{eqn_3}), we have
\begin{equation}
\label{eqn_19x7}
\begin{array}{l}
{C_T}({\left| \psi  \right\rangle _k}) = \sum\limits_{i = 0}^{{2^n} - 1} {{\theta _i}{C_T}\left| i \right\rangle } \\
 = {\theta _s}\left| t \right\rangle  + {\theta _t}\left| s \right\rangle  + \sum\limits_{i = 0,i \ne s,t}^{{2^n} - 1} {{\theta _i}\left| i \right\rangle }
\end{array}
\end{equation}
By comparing (\ref{eqn_19x7}) with Eq. (\ref{eqn_12}), we have proved that ${C_T} = {G_T}$, i.e., a sequence of quantum gates ${C^n}({X_y})$, $1 \le y \le n$, implements the two-point swapping ${G_T}$.

Since $s$ and $t$ differ in at most $n$ locations, there exists a Gray code such that $m \le n + 1$. In addition, the ${C^n}({X_y})$ gate is implemented using $O(n)$ single-qubit and controlled-NOT (i.e., ${N_{C1}}$ in Figure \ref{fig:2})) gates \cite{bibitem13, bibitem8}. ${C_T}$ is built by $2m - 3$ gates ${C^n}({X_y})$ ($1 \le y \le n$), thus the complexity of ${C_T}$ is $O({n^2})$, i.e., the complexity of a two-point swapping operator ${G_T}$ is also $O({n^{\rm{2}}})$.
\end{proof}
\section{Proof of Theorem \ref{theorem2}}
\begin{proof}
\label{proof2}
From definition \ref{def_4}, we see that a symmetric flip of a k-dimensional image with ${2^n}$ pixels requires a swapping of ${2^n}$ pixel points, thus the operator $G_F^{{v_i}}$ can be implemented using ${{\rm{2}}^{n{\rm{ - 1}}}}$ two-point swappings, i.e.,
\begin{equation}
\label{eqn_21x1}
T = \prod\limits_{i = 1}^{{2^{n - 1}}} {{G_{{T_i}}}}
\end{equation}
and
\begin{equation}
\label{eqn_21x2}
G_F^{\left| {{v_j}} \right\rangle }{\left| \psi  \right\rangle _k} = T{\left| \psi  \right\rangle _k}
\end{equation}
 where ${G_{{T_i}}}$ is the $i$th two-point swapping operator and  $\left| \psi \right\rangle_{k} $ is the NASS state in Eq. (\ref{eqn_3}). Since both operators are unitary, it is easy to see that

\begin{equation}
\label{eqn_21x3}
G_F^{\left| {{v_j}} \right\rangle }{\rm{ = }}T = \prod\limits_{i = 1}^{{2^{n - 1}}} {{G_{{T_i}}}}
\end{equation}
which implies that the operator $T$ can be implemented by ${{\rm{2}}^{n{\rm{ - 1}}}}$
two-point swappings (see Figure \ref{fig:8}). Therefore, the complexity of the symmetric flip is also $O(n)$.
\end{proof}


\section{Proof of Theorem \ref{theorem4}}
\begin{proof}
\label{proof4}
Set
\begin{equation}
\label{eqn_23x1}
\left| \psi  \right\rangle  = \sum\limits_{j = 0}^{{2^m} - 1} {{\alpha _j}\left| j \right\rangle }
\end{equation}

Application of ${T_{key}}$ to $\left| \psi \right\rangle $ gives that 
\begin{equation}
\label{eqn_30}
{T_{key}}\left| \psi  \right\rangle  = {\alpha _{{2^m} - 1}}\left| 0 \right\rangle  + \sum\limits_{j = 0}^{{2^m} - 2} {{\alpha _j}} \left| {j + 1} \right\rangle
\end{equation}

Define a sequence of two-point swappings as follows.
\begin{equation}
\label{eqn_31}
\begin{array}{l}
G_T^{{2^m} - 1}:\left| {{2^m} - 1} \right\rangle  \leftrightarrow \left| 0 \right\rangle \\
G_T^{{2^m} - 2}:\left| {{2^m} - 2} \right\rangle  \leftrightarrow \left| {{2^m} - 1} \right\rangle \\
G_T^{{2^m} - 3}:\left| {{2^m} - 3} \right\rangle  \leftrightarrow \left| {{2^m} - 2} \right\rangle \\
 \vdots \\
G_T^1:\left| 1 \right\rangle  \leftrightarrow \left| 2 \right\rangle
\end{array}
\end{equation}

Applying $(G_T^{\rm{1}} \cdots G_T^{{2^m} - 2}G_T^{{2^m} - 1})$ to $\left| \psi \right\rangle $ in (\ref{eqn_23x1}), we obtain
\begin{equation}
\label{eqn_32}
(G_T^{\rm{1}} \cdots G_T^{{2^m} - 2}G_T^{{2^m} - 1})\left| \psi  \right\rangle  = {\alpha _{{2^m} - 1}}\left| 0 \right\rangle  + \sum\limits_{j = 0}^{{2^m} - 2} {{\alpha _j}} \left| {j + 1} \right\rangle
\end{equation}

It follows from (\ref{eqn_30}) and (\ref{eqn_32}) that $G_T^{\rm{1}}\cdots G_T^{{2^m} - 2}G_T^{{2^m} - 1} = {T_{key}}$, i.e., the operator ${T_{key}}$ in Eq. (\ref{eqn_22X10}) can be implemented by ${{\rm{2}}^{\rm{m}}}{\rm{ - 1}}$ two-point swappings.

By Theorem \ref{theorem1} we know that the complexity of $G_T^1\cdots G_T^{{2^m} - 2}G_T^{{2^m} - 1}$ is $O({2^m}{m^2})$, i.e., the complexity of ${T_{key}}$ is $O({2^m}{m^2})$. From Figure \ref{fig:t12} we see that the complexity of ${T_{\left| {{v_x}} \right\rangle }}$ is the same as that of ${T_{key}}$, i.e., the complexity of ${T_{\left| {{v_x}} \right\rangle }}$ is also $O({2^m}{m^2})$.
\end{proof}

\section*{Acknowledgments}
This work is supported by National Natural Science Foundation of China (Nrs. 61463016, 61462026, 11271138, 11531004), Simons Foundation Grant No. 198129, Program for New Century Excellent Talents in University under Grant No. NCET-13-0795, Landing project of science and technique of colleges and universities of Jiangxi Province under Grant No. KJLD14037, Training program of academic and technical leaders of Jiangxi Province under Grant No. 20153BCB22002, Natural Science Foundation of Jiangxi Province (No. 20151BAB207019), the advantages of scientific and technological innovation team of Nanchang City under Grant No. 2015CXTD003, the research funds of East China Jiaotong University (15XX02 and 15QT02), and an award of China Scholarship Council.







\end{document}